%vc: do we still want to include the axiomatization of mmvip?
%vc: social choice characterization?
%vc: might be nice to put a space after \forall, just using \
%vc: use little \cup unless you're taking the union over a number of things

%abstract
%spellcheck

\documentclass[lnicst]{svmultln}
\usepackage{times}
\usepackage{verbatim}
\usepackage{amsmath}
\usepackage{amssymb}

\begin{document}

\mainmatter

\title{False-name-proofness with Bid Withdrawal}

\author{Mingyu Guo\inst{1} \and Vincent Conitzer\inst{2}}
\institute{
University of Liverpool, Department of Computer Science, Liverpool, Merseyside, UK \newline \email{Mingyu.Guo@liverpool.ac.uk} \and
Duke University, Department of Computer Science, Durham, NC, USA \newline \email{conitzer@cs.duke.edu}}

\date{}
\maketitle

\begin{abstract} We study a more powerful variant of false-name manipulation in
Internet auctions: an agent can submit multiple false-name bids, but then,
once the allocation and payments have been decided, withdraw some of her
false-name identities (have some of her false-name identities refuse to
pay).  While these withdrawn identities will not obtain the items they won,
their initial presence may have been beneficial to the agent's other
identities.  We define a mechanism to be {\em false-name-proof with
  withdrawal (FNPW)} if the aforementioned manipulation is never
beneficial.  FNPW is a stronger condition than false-name-proofness (FNP).

\begin{comment}
We first give a necessary and sufficient condition on the type space for
the VCG mechanism to be FNPW.  We then characterize both the payment rules
and the allocation rules of FNPW mechanisms in general combinatorial
auctions.  Based on the characterization of the payment rules, we derive a
condition that is sufficient for a mechanism to be FNPW.

We also propose the {\em maximum marginal value item pricing
(MMVIP)} mechanism.  We show that MMVIP is FNPW and exhibit some of its
desirable properties.  We then propose an automated mechanism design
technique that transforms any feasible mechanism into an FNPW mechanism,
and prove some basic properties about this technique.
Since FNPW is stronger than FNP, the mechanisms we obtain in this paper are
also FNP.  Finally, we prove a strict upper bound on the worst-case
efficiency ratio of FNPW mechanisms. 
In the appendix, we give a characterization of FNP(W) social choice rules.
\end{comment}
 \end{abstract}

\section{Introduction}

With the rapid development of electronic commerce, Internet auctions have
become increasingly popular over the
years~\cite{Monderer98:Optimal,Wurman98:Michigan,Sandholm96:Vickrey}.  Unlike
traditional auctions, typical Internet auctions pose no geographical
constraint. That is, sellers and bidders from all over the world can
participate in an Internet auction remotely over the Internet, without having
to physically attend the auction event.  For sellers, this reduces the cost of
running an auction.  For bidders, this lowers the entry cost.  Effectively, in
an individually rational auction mechanism (a mechanism that guarantees
nonnegative utilities for the agents), a bidder, at worst, loses
nothing (but time) by participating in an auction.  On the one hand, this
encourages more bidders to join the auction, which potentially leads to higher
revenue for the seller, as well as a higher social welfare for the bidders.  On
the other hand, it enables the bidders to manipulate by submitting multiple
bids via multiple fictitious identities ({\em e.g.}, user accounts linked to
different e-mail addresses).

The line of research on preventing manipulation via multiple fictitious
identities in Internet auctions was explicitly framed by the groundbreaking
work of Yokoo {\em et al.}~\cite{Yokoo04:effect}.  Extending {\em
	strategy-proofness}---the concept of ensuring that it is always in a
	bidder's best interest to report her valuation function truthfully---the
	authors define an auction mechanism to be {\em false-name-proof (FNP)} if the
	mechanism is not only strategy-proof, but also, under this mechanism, an
	agent cannot benefit from submitting multiple bids under false names
	(fictitious identities).  The authors also extended the revelation
	principle~\cite{Myerson81:Optimal} to incorporate false-name-proofness.
	That is (roughly stated), in settings where false-name bids are possible,
	it is without loss of generality to focus only on false-name-proof
	mechanisms.  

\begin{comment}
Several false-name-proof mechanisms have been proposed for general
combinatorial auction settings (settings where multiple items are for sale
at the same time, and agents can express valuation functions for the items
that exhibit substitutability and
complementarity~\cite{Cramton06:Combinatorial}).  These are the Set
mechanism~\cite{Yokoo03:Characterization}, the Minimal Bundle (MB)
mechanism~\cite{Yokoo03:Characterization}, and the Leveled Division Set
(LDS) mechanism~\cite{Yokoo01:Robust}.\footnote{A very recent
  paper~\cite{Iwasaki10:Worst} introduces a new mechanism called the ARP
  mechanism.  However, this mechanism requires the additional restriction
  that agents are single-minded.}  Other work on false-name-proofness
includes the following. For general combinatorial auction settings,
Yokoo~\cite{Yokoo03:Characterization} and Todo {\em et
  al.}~\cite{Todo09:Characterizing} characterized the payment rules and the
allocation rules of false-name-proof mechanisms, respectively.  False-name
proofness has also been studied in the context of voting
mechanisms~\cite{Conitzer08:Anonymity,Wagman08:Optimal}.
Finally, Conitzer~\cite{Conitzer07:Limited} proposed the idea of preventing false-name
manipulation by verifying the identities of certain limited subsets of agents.
\end{comment}

Focusing primarily on combinatorial auctions, this paper continues the line of
research on false-name-proofness by considering an even more powerful variant
of false-name manipulation: an agent can submit multiple false-name bids, but
then, once the allocation and payments have been decided, withdraw some of her
false-name identities (have some of her false-name identities refuse to pay).
While these withdrawn identities will not obtain the items they won, their
initial presence may have been beneficial to the agent's other identities, as
shown in the following example:  

\begin{example} There are three single-minded
%\footnote{A single-minded agent is only interested in a single, specific bundle of items.}
 agents $1,2,3$ and two items
$A,B$. Agent $1$ bids $4$ on $\{A,B\}$.  Agent $2$ bids $2$ on $\{B\}$.  Let us
analyze the strategic options for agent $3$, who is single-minded on $\{A\}$, with
valuation $1$.  (That is, $\forall S\subseteq \{A,B\}$, agent $3$'s valuation
for $S$ is $1$ if and only if $\{A\}\subseteq S$.) The mechanism under
consideration is the VCG mechanism.  

If agent $3$ reports truthfully, then she wins nothing and pays nothing. Her
resulting utility equals $0$.  

If agent $3$ attempts ``traditional'' false-name manipulation, that is,
submitting multiple false-name bids, and honoring 
all of them at the end, then
her utility is still at most $0$: if $3$ wins both items with one identity,
then she has to pay at least $4$ (while her valuation for the items is only
$1$); if $3$ wins both items with two identities (one item for each identity),
then the identity winning $\{B\}$ has to pay at least $2$; if $3$ wins only
$\{B\}$ or nothing, then her utility is at most $0$; 
if $3$ wins only $\{A\}$
(in which case $\{B\}$ has to be won by agent $2$), 
then $3$'s winning identity's payment equals
the other identities' overall valuation for $\{A,B\}$ (at least $4$), minus
$2$'s valuation for $\{B\}$ (which equals $2$).  That is, in this case, $3$ has to
pay at least $2$. So, overall, $3$'s utility is at most $0$ if she honors all her
bids.

However, agent $3$ can actually benefit from submitting multiple false-name
bids, as long as she can withdraw some of them.  For example, $3$ can use two
identities, $3a$ and $3b$.  $3a$ bids $1$ on $\{A\}$. $3b$ bids $4$ on $\{B\}$.
At the end, $3a$ wins $\{A\}$ for free, and $3b$ wins $\{B\}$ for $2$. If $3$ can
withdraw identity $3b$ ({\em e.g.}, by never checking that e-mail account
anymore), never making the payment and never collecting $\{B\}$, then, she has
obtained $\{A\}$ for free, resulting in a utility of $1$.
\end{example}

If we wish to guard against manipulations like the above, we need to extend the
false-name-proofness condition.  We refer to the new condition as {\em
false-name-proofness with withdrawal (FNPW)}. It requires that, regardless
of what other agents do, an agent's optimal strategy is to report
truthfully using a single identity, even if she has the option to submit
multiple false-name bids, and withdraw some of them at the end of the
auction.

\begin{comment} To our knowledge, this stronger version of false-name-proofness
has not previously been considered.  Whether it is more or less reasonable than
the original version depends on the context.  For example, in an auction, it
may be possible to require each participant to place the amount of her bid in
escrow, which would prevent manipulation based on withdrawal.  However, in some
auction contexts, such an arrangement would be too unattractive to the bidders;
it also reduces the anonymity of bidding.  Additionally, if we are in a setting
where the payments are not monetary, but rather are in terms of performance of
future services, then it is not possible to put the payments in escrow.
\end{comment}

Whether this stronger version of false-name-proofness is more or less
reasonable than the original version depends on the context.  First of all, bid
withdrawal is a common example of strategic bidding and has been observed in
real-life auctions such as the FCC Spectrum
Auctions~\cite{Porter99:The,Rothkopf91:On,Holland05:Robust,Cramton97:The}.
That is, bid withdrawal is a threat that we can not ignore.  Then, in some
sense, false-name manipulation and bid withdrawal go hand in hand---in
highly-anonymous settings where agents can easily create multiple fictitious
identities, agents generally can also easily discard their fictitious
identities (without needing to worry about reputations or lawsuits).  From
these perspectives, it is reasonable to study FNPW mechanisms.  In any case,
FNPW is a useful conceptual tool for analyzing false-name-proof
mechanisms.  Indeed, this paper also contributes to the research on
false-name-proofness in the traditional sense. Since FNPW is stronger
than FNP, the results in this paper, such as the mechanism proposed,
%, as well as the automated mechanism design technique, 
should be of interest in the FNP context as well.
  
On the other hand, people may argue that there are two natural measures to prevent bid withdrawal, as
shown below:

\begin{itemize}
\item Require each identifier to pay a certain amount of deposit before the auction begins.
If an identifier withdraws, then her deposit is forfeited. 
\item If any identifier withdraws, then we reallocate according to certain reallocation rules ({\em e.g.}, run the original auction again).
\end{itemize}

However, while discouraging bid withdrawal, both measures lead to other problems.
The problem of the first measure is that small deposits may not be enough
to discourage bid withdrawal, while large deposits may significantly discourage
participation ({\em e.g.}, an agent may be willing to sell some of her assets
			to gather enough cash to pay for the items once she wins, but may
			not be willing to sell her assets just for paying for the deposit).
Also, when there are many participants and few potential winners, it is
unnecessary and costly to collect everyone's deposit.  The problem
of the second measure is that 1) After reallocation, an agent who did not
withdraw may end up with worse result than before. That is, an agent may
be punished by others' faults.
2) It may take some time before the auctioneer figures out that a reallocation
is needed ({\em e.g.}, it is the transaction deadline, but there are still
			winners who haven't paid). That is, reallocation may be late
({\em e.g.}, the items expired or the bids expired). 3) An agent may submit
many false-name bids. After the initial result comes out, she may get some idea of the
other agents' bids. Then, this agent can withdraw all her false-name bids,
	  except for the one bid that is the best response to the other agents' bids.\footnote{This type of manipulation was studied in \cite{Rothkopf91:On}.}  In light
	  of the above, in this paper, we focus on mechanisms that discourage bid
	  withdrawal in the first place, without resorting to charging deposit or reallocation.

%The paper is organized as follows. In Section~\ref{sec:formalization}, we
%formalize the problem we study. In Section~\ref{sec:type}, we give a sufficient
%and necessary condition on the type space for the VCG mechanism to be FNPW.  In
%Section~\ref{sec:characterization}, we characterize both the payment rules and
%the allocation rules of FNPW mechanisms in general combinatorial auctions.  We
%also derive a sufficient condition that can be used to check whether a
%mechanism is FNPW.  In Section~\ref{sec:mmvip}, we propose the {\em maximum
%marginal value item pricing (MMVIP)} mechanism, which we prove is FNPW.  In
%Section~\ref{sec:automated}, we propose an automated mechanism design technique
%that transforms any feasible mechanism into an FNPW mechanism.  This technique
%builds on the sufficient condition in Section~\ref{sec:characterization}.  In
%Section~\ref{sec:worst}, we show that, under a minor condition, the mechanism
%that sells all the items as a single bundle has the highest worst-case
%efficiency ratio among all FNPW mechanisms.  Finally, in the appendix, we give
%a characterization of FNP(W) social choice rules.

\section{Related Research and Contributions}

The main topic of this paper is the comparison between FNP and FNPW.  FNPW is
certainly more restricted, but this doesn't necessarily mean that FNPW is less
interesting. It could be that FNPW is only slightly more restricted and much
more structured. This is what we are trying to find out. Our results on FNPW
and their comparison against previous results on FNP are summarized below:

Yokoo~\cite{Yokoo03:Characterization} and Todo {\em et
al.}~\cite{Todo09:Characterizing} characterized the payment rules and the
allocation rules of FNP mechanisms in general combinatorial auctions,
respectively.  We present
similar results on the characterization of FNPW mechanisms.  As was
in the case of FNP, the characterization of FNPW {\em payment} rules
is useful for proving a given mechanism to {\em be} FNPW, while the
characterization of FNPW {\em allocation} rules is useful for
proving a given mechanism to {\em be not} FNPW.

With our characterizations, we are able to prove whether an existing FNP
mechanism is FNPW or not.\footnote{It should be noted that the
	characterizations are helpful, but definitely not {\em necessary} for
	  proving whether an existing FNP mechanism is FNPW or not. For example, we
	  could always try to use counter examples to show that a mechanism is not
	  FNPW.} There are three known FNP mechanisms for general combinatorial
	  auction settings. These are the Set
	  mechanism~\cite{Yokoo03:Characterization}, the Minimal Bundle (MB)
		mechanism~\cite{Yokoo03:Characterization}, and the Leveled Division Set
		(LDS) mechanism~\cite{Yokoo01:Robust}.\footnote{A very recent
			paper~\cite{Iwasaki10:Worst} introduced another mechanism called
			  the ARP mechanism.  However, this mechanism requires the
			  additional restriction that agents are single-minded.}  We show
			  that both Set and MB are FNPW, while LDS is not.  We also show
			  that the VCG mechanism is FNPW if and only if the type space
			  satisfies the submodularity condition (with a minor assumption).  Previously, Yokoo {\em et
				  al.}~\cite{Yokoo04:effect} showed that the submodularity
				  condition is sufficient for the VCG mechanism to be FNP.

We then compare the worst-case efficiency ratios of FNP and FNPW mechanisms.
Iwasaki {\em et al.}~\cite{Iwasaki10:Worst} showed that, under a
minor condition, the worst-case efficiency
ratio of any feasible FNP mechanism is at most $\frac{2}{m+1}$.
We show that under the same condition, the 
worst-case efficiency
ratio of any feasible FNPW mechanism is at most $\frac{1}{m}$.
\begin{footnotesize}
\begin{table}
\caption{FNP v.s. FNPW}
\begin{tabular}{|c|c|c|c|c|c|c|c|}
	\hline
 & Characterization of & Characterization of & Set, MB, LDS& Worst-case \\
&payment rules & allocation rules & VCG w. submodularity & efficiency ratio\\
\hline
FNP & NSA\cite{Yokoo03:Characterization} &  Sub-additivity\cite{Todo09:Characterizing} &Yes\cite{Yokoo03:Characterization},
Yes\cite{Yokoo03:Characterization},Yes\cite{Yokoo01:Robust}&$\frac{2}{m+1}$\cite{Iwasaki10:Worst}\\
	&									 & Weak-monotonicity & Yes\cite{Yokoo04:effect} & \\ 
\hline
FNPW & NSAW & Sub-additivity & Yes, Yes, No & $\frac{1}{m}$\\
	& (S-NSAW) & Weak-monotonicity & Yes & \\
	&		 & Withdrawal-monotonicity & &\\
\hline
\end{tabular}
\end{table}
\end{footnotesize}

At the end, we propose the {\em maximum marginal value item pricing (MMVIP)}
mechanism.  We show that MMVIP is FNPW and exhibits some desirable
properties.  Since FNPW is stronger than FNP, MMVIP also adds to the set of known FNP mechanisms.

Finally, in Appendix~\ref{sec:amd}, we propose an (exponential-time) automated mechanism
design technique that transforms any feasible mechanism into a FNPW mechanism,
and prove some basic properties about this technique.  We also give a
characterization of FNP(W) social choice rules in Appendix~\ref{sec:social}.

\section{Formalization}
\label{sec:formalization}

We will use the following notation:

\begin{itemize}
\item 
$N=\{1,2,\ldots,n\}$: the set of agents %($n$ is the number of agents)

\item
$G=\{1,2,\ldots,m\}$: the set of items %($m$ is the number of items)

\item
$\Theta$: the type space of each agent

\item 
$\theta_i \in \Theta$: agent $i$'s reported type (since we consider only strategy-proof mechanisms, when there is no ambiguity, we also use $\theta_i$ to denote $i$'s true type)

\item 
$-i$: the set of agents other than agent $i$ 

\item
$\theta_{-i}\in \Theta^{n-1}$: types reported by agents other than agent $i$
\end{itemize}

We study combinatorial auction settings satisfying the following assumptions:

\begin{itemize} 
	
\item Each agent has a {\em quasi-linear} utility function. That is, there
exists a function $v$ (determined by the setting) such that if an agent with true
type $\theta\in \Theta$ ends up with bundle $B\subset G$ and payment $p\in
\mathbb{R}$,
%\footnote{If this agent submitted multiple false-name bids, then
%$B$ is the union of the items obtained by the agent's false-name identities,
%and $p$ is the sum of the payments of the agent's false-name identities}, 
then her utility equals $v(\theta, B)-p$.

\item $\forall \theta\in \Theta$, we have $v(\theta, \emptyset)=0$. 

\item $\forall B_1\subseteq B_2\subseteq G$, $\forall \theta\in \Theta$, we
	have $v(\theta, B_1)\le v(\theta,B_2)$. That is, there is {\em free
	disposal}.

\item 
%An agent's valuation for a nonempty bundle can be any nonnegative real
%number, as long as $\forall S_1\subseteq S_2\subseteq G$, she does not value
%$S_1$ strictly higher than $S_2$. 
An agent can have any valuation function satisfying the above conditions.
That is, we are dealing with {\em
rich domains}~\cite{Bikhchandani06:Weak}.  It should be noted that in
Section~\ref{sec:type}, we study how restrictive the type space has to be in order
for the VCG mechanism to be FNPW. 
That is, we do not have the rich-domain
assumption in Section~\ref{sec:type}, which is an exception. 

%$\forall B$ with $\emptyset\subsetneq B\subseteq G$, $\forall c\in
%[0,\infty)$, there exists $\theta\in \Theta$, so that $v(\theta,B)=c$.  This
%assumption also implies that there exists $\mathbb{0}\in \Theta$, which makes
%$v(\mathbb{0},G)=0$.  That is, an agent with type $\mathbb{0}$ is not
%interested in any items.
%That is, the domain is rich~\cite{Bikhchandani06:Weak}.
\end{itemize}

A mechanism consists of an allocation rule $X: (\Theta, \Theta^{n-1})\to
\mathcal{P}(G)$ and a payment rule $P: (\Theta, \Theta^{n-1})\to \mathbb{R}$.
$X(\theta_i,\theta_{-i})$ is the bundle agent $i$ receives when reporting
$\theta_i$ (when the other agents report $\theta_{-i}$).
$P(\theta_i,\theta_{-i})$ is the payment agent $i$ has to make when
reporting $\theta_i$ (when the other agents report $\theta_{-i}$).  When there
is no ambiguity about the other agents' types, we simply use $X(\theta_i)$ and
$P(\theta_i)$ in place of $X(\theta_i,\theta_{-i})$ and
$P(\theta_i,\theta_{-i})$.  

Throughout the paper, we only consider mechanisms satisfying the following
conditions:

\begin{itemize} 

\item {\em Strategy-proofness:} $\forall \theta_i,\theta_i',\theta_{-i}$, we have
$v(\theta_i,X(\theta_i))-P(\theta_i)
\ge
v(\theta_i,X(\theta_i'))-P(\theta_i')$. That is, if an
agent uses only one identity, then truthful reporting is a dominant strategy.

\item {\em Pay-only:} $\forall \theta_i,\theta_{-i}$, we have $P(\theta_i)\ge 0$.

\item {\em Individual rationality:} $\forall \theta_i,\theta_{-i}$, we have
$v(\theta_i, X(\theta_i))-P(\theta_i)
\ge 0$. That is, if an agent reports
truthfully, then her utility is
guaranteed to be nonnegative. This condition also implies that
if an agent does not win any items, or has valuation $0$ for all the items, then
her payment must be $0$.

\item {\em Consumer sovereignty:} $\forall \theta_{-i}$, $\forall B\subseteq
G$, there exists $\theta_i\in \Theta$ such that
$X(\theta_i,\theta_{-i})\supseteq B$. That is, 
no matter what the other agents bid,
an agent can always win any bundle
(possibly at the cost of a large payment).

\item {\em Determinism and symmetry:} We only consider deterministic mechanisms that
are symmetric over 
%(permutations of)
both the agents and the items (except for
ties).  
\end{itemize}

%Actually, 
Yokoo~\cite{Yokoo03:Characterization} showed that in our setting,
the mechanisms satisfying the above conditions coincide with the {\em (anonymous)
price-oriented, rationing-free (PORF)} mechanisms. Similar price-based
representations have also been presented by others, including \cite{Lavi03:Towards}.
The PORF mechanisms work as
follows:

\begin{itemize}
	\item The agents submit their reported types.

	\item The mechanism is characterized by a price function $\chi:
		\mathcal{P}(G)
		\times \Theta^{n-1} \to [0,\infty)$.  For any agent $i$,
		for any multiset $\theta_{-i}$ of types reported 
		by the other agents, for any set of items $S\subseteq G$, $\chi(S,\theta_{-i})$ is
		the price of $S$ offered to $i$ by the mechanism. That is, $i$ can purchase
		$S$ at a price of $\chi(S,\theta_{-i})$.  $\forall \theta_{-i}$, we have
		$\chi(\emptyset,\theta_{-i})=0$.  That is, the price of nothing is always
		zero. $\forall \theta_{-i}$, $\forall S_1\subseteq S_2\subseteq G$, we have
		$\chi(S_1,\theta_{-i})\le \chi(S_2,\theta_{-i})$.  That is, a larger
		bundle always
		has a higher (or the same) price.

	\item 
	%Agent $i$ then can select one and only one set of items $S$ for her
		%to purchase.  
		The mechanism will select a bundle for agent $i$ that is optimal for her given
		the prices, that is, the bundle chosen for $i$ is in 
		$\arg\max_{S\subseteq G} \{v(\theta_i,S)-\chi(S,\theta_{-i})\}$.
The agent then pays the price for this bundle.

	\item Naturally, the mechanism must ensure that no item is allocated to two different agents.
This involves setting prices carefully, as well as breaking ties.
	%The mechanism must ensure that no two agents would purchase the same
%		item (possibly by, but not limited to, choosing a tie-breaking rule
%		carefully).
\end{itemize}

Since all {\em feasible} mechanisms (mechanisms that satisfy the desirable
conditions in our setting) are PORF mechanisms, besides using $X$ (the
allocation rule) and $P$ (the payment rule) to refer to a mechanism, we can
also use the price function $\chi$ to refer to a mechanism, namely, the PORF
mechanism with price function $\chi$.\footnote{Technically, there can be 
multiple PORF mechanisms
with the same price function due to tie-breaking, but this will generally not
be an issue.}

In the remainder of this section, we formally define the traditional
false-name-proofness (FNP) condition,
as well as our new false-name-proofness with withdrawal
(FNPW) condition.

\begin{definition} {\bf FNP.}
A mechanism characterized by allocation rule $X$ and payment rule $P$ is FNP if and only if it satisfies the following:

\begin{center}
$\forall \theta_i$, $\forall \theta_{i1},\theta_{i2},\ldots,\theta_{ik}$, $\forall \theta_{-i}$, we have 

$v(\theta_i,X(\theta_i,\theta_{-i}))-P(\theta_i,\theta_{-i})\ge
v(\theta_i,\bigcup\limits_{j=1}^k X(\theta_{ij},\theta_{-i}\cup(\bigcup\limits_{t\neq j}\theta_{it})))-\sum\limits_{j=1}^k P(\theta_{ij},\theta_{-i}\cup(\bigcup\limits_{t\neq j}\theta_{it}))$
\end{center}

That is, truthful reporting using a single
identifier is always better than submitting multiple false-name bids.
\end{definition}

\begin{definition}
\label{def:fnpw}
{\bf FNPW.}
A mechanism characterized by allocation rule $X$ and payment rule $P$ is FNPW if and only if it satisfies the following:

\begin{center}
$\forall \theta_i$, $\forall \theta_{i1},\theta_{i2},\ldots,\theta_{ik}$, 
$\forall \theta_{i1}',\theta_{i2}',\ldots,\theta_{iq}'$,
$\forall \theta_{-i}$, we have 

$v(\theta_i,X(\theta_i,\theta_{-i}))-P(\theta_i,\theta_{-i})\ge$
$v(\theta_i,\bigcup\limits_{j=1}^k X(\theta_{ij},\theta_{-i}\cup(\bigcup\limits_{t\neq j}\theta_{it})\cup(\bigcup\theta_{it}')))-\sum\limits_{j=1}^k P(\theta_{ij},\theta_{-i}\cup(\bigcup\limits_{t\neq j}\theta_{it})\cup(\bigcup\theta_{it}'))$
\end{center}
That is, truthful reporting using a single identifier is always better than submitting multiple
false-name bids and then withdrawing some of them.
\end{definition}

It is easy to see that FNPW is equivalent to FNP plus the following condition: 
an agent's utility for reporting truthfully does not increase if we add another agent.

\section{Characterization of FNPW mechanisms}
\label{sec:characterization}

Yokoo~\cite{Yokoo03:Characterization} and Todo {\em et
al.}~\cite{Todo09:Characterizing} characterized the payment rules (the price
functions in the PORF representation) and the allocation rules of FNP
mechanisms, respectively.  In this section, we present similar results on the
characterization of FNPW mechanisms.

\subsection{Characterizing FNPW payments}
\label{sec:nsaw}

We recall that in our setting, a feasible mechanism corresponds to a PORF
mechanism, characterized by a price function $\chi$.
Yokoo~\cite{Yokoo03:Characterization} gave the following sufficient and
necessary condition on $\chi$ for the mechanism characterized by $\chi$ to be
FNP. 

\begin{definition} {\bf No superadditive price increase (NSA)~\cite{Yokoo03:Characterization}.} Let $O$ be an
	arbitrary set of agents.\footnote{In a slight abuse of language, we also use ``a set of agents''
	to refer to the types reported by this set of agents.}
	We run mechanism $\chi$ (a PORF mechanism
	characterized by price function $\chi$) for the agents in $O$.  Let $Y$ be an
	arbitrary subset of $O$.  Let $B_i$ ($i\in Y$) be the set of items agent
	$i$ obtains.  We must have 
	$\sum\limits_{i\in Y}\chi(B_i,O-\{i\})\ge
	\chi(\bigcup\limits_{i\in Y}B_i,O-Y)$.  
\end{definition}

By modifying the NSA condition, we get the following sufficient and necessary
condition on $\chi$ for mechanism $\chi$ to be FNPW.

\begin{definition} {\bf No superadditive price increase with withdrawal (NSAW).} Let $O$ be an
	arbitrary set of agents. 
	%We use the $\theta_i$ ($i\in O$) to denote the
	%types of the agents in $O$.  
	We run mechanism $\chi$ for the agents in $O$.  Let $Y$ and $Z$ be two
	arbitrary nonintersecting
	subsets of $O$.  Let $B_i$ ($i\in Y$) be the set of items agent
	$i$ obtains.  We must have 
	$\sum\limits_{i\in Y}\chi(B_i,O-\{i\})\ge
	\chi(\bigcup\limits_{i\in Y}B_i,O-Y-Z)$.  
\end{definition}

%NSAW is equivalent to NSA plus the following condition.

%\begin{definition} {\bf Prices increase with agents (PIA).} Let $O$ be an
%	arbitrary set of agents. Let $a$ be another agent not in $O$.
%	$\forall S\subseteq G$, we must have
%		$\chi(S,O\cup\{a\})\ge \chi(S,O)$.
%	That is, from the perspective of agent $i$, if another agent joins in, then
%	the price $i$ faces for any set of items  must (weakly) increase.
%\end{definition}

%\begin{proposition} NSAW is equivalent to NSA plus PIA.
%\label{prop:NSAplusPIA}
%\end{proposition}

\begin{theorem} Mechanism $\chi$ is FNPW
if and only if $\chi$ satisfies the NSAW condition.  
\label{th:NSAW}
\end{theorem}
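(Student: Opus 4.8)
The plan is to prove both implications directly in the PORF representation, using two facts: a truthfully reporting agent with true type $\theta^\star$ facing others $\theta_{-i}$ obtains utility $\max_{S\subseteq G}\{v(\theta^\star,S)-\chi(S,\theta_{-i})\}$, and $\chi$ is monotone in its bundle argument. The bridge between the two conditions is the following dictionary. Given any false-name-with-withdrawal manipulation by an agent of true type $\theta^\star$, let $O$ be the multiset consisting of the genuine other agents together with all identities she submits; let $Y=\{\theta_{i1},\dots,\theta_{ik}\}$ be the identities she honors and $Z=\{\theta_{i1}',\dots,\theta_{iq}'\}$ those she withdraws, so that $O-Y-Z$ is exactly the set of genuine others. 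Since a PORF mechanism depends only on the multiset of reported types, each honored identity $i\in Y$ is offered the price schedule $\chi(\cdot,O-\{i\})$, wins the same bundle $B_i$, and pays the same price $\chi(B_i,O-\{i\})$ as it would in a genuine run of $\chi$ on $O$.

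For NSAW $\Rightarrow$ FNPW, write $U=\bigcup_{i\in Y}B_i$. The manipulation yields bundle $U$ at total payment $\sum_{i\in Y}\chi(B_i,O-\{i\})$, so its utility is $v(\theta^\star,U)-\sum_{i\in Y}\chi(B_i,O-\{i\})$. Applying NSAW to this $O,Y,Z$ bounds the summed prices below by $\chi(U,O-Y-Z)$, so the manipulation utility is at most $v(\theta^\star,U)-\chi(U,O-Y-Z)$. Reporting truthfully against the genuine others $O-Y-Z$ lets the agent choose her utility-maximizing bundle, which gives her at least $v(\theta^\star,U)-\chi(U,O-Y-Z)$. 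Hence truthfulness weakly dominates the manipulation, and FNPW holds.

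For FNPW $\Rightarrow$ NSAW, fix an arbitrary $O$, disjoint $Y,Z\subseteq O$, and the bundles $B_i$ ($i\in Y$) that the run on $O$ produces, and set $U=\bigcup_{i\in Y}B_i$. Invoking the rich-domain assumption, I introduce a manipulator whose true type $\theta^\star$ is single-minded on $U$ with valuation $M>\chi(G,O-Y-Z)$, so that winning a superset of $U$ is strictly optimal for her. She honors the identities in $Y$ and withdraws those in $Z$ against genuine others $O-Y-Z$; by the dictionary above her manipulation utility is $M-\sum_{i\in Y}\chi(B_i,O-\{i\})$. If instead she reports $\theta^\star$ truthfully against $O-Y-Z$, then monotonicity of $\chi$ makes $U$ the cheapest superset of $U$, so her optimal utility is exactly $M-\chi(U,O-Y-Z)$. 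The FNPW inequality (truthful $\ge$ manipulation) cancels $M$ and rearranges to $\sum_{i\in Y}\chi(B_i,O-\{i\})\ge\chi(U,O-Y-Z)$, which is precisely NSAW.

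The crux is the forward direction, and within it the claim that the honored identities reproduce the bundles $B_i$ and prices $\chi(B_i,O-\{i\})$ of the genuine run on $O$. This is exactly where determinism and symmetry are used: the mechanism sees only the multiset of reported types, and the manipulator's identities together with $O-Y-Z$ form precisely $O$, so the allocation and the tie-breaking must agree with the run that defined the $B_i$. The single-minded, high-value construction --- and hence the rich-domain assumption --- is the device that converts the FNPW utility inequality into the clean price inequality, by forcing the truthful benchmark to collapse to the single bundle $U$ purchased at price $\chi(U,O-Y-Z)$.
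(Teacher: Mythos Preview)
Your proof is correct and follows essentially the same approach as the paper's. The only cosmetic difference is in the FNPW $\Rightarrow$ NSAW direction: the paper argues by contradiction using a single-minded agent whose valuation for $U$ equals exactly $\chi(U,O-Y-Z)$ (so her truthful utility is $0$ while the manipulation yields positive utility), whereas you argue directly with a large valuation $M$ and cancel it; both constructions invoke the rich-domain assumption in the same way.
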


\subsection{A sufficient condition for FNPW}

The NSAW condition in Section~\ref{sec:nsaw} leads to the following sufficient
condition for mechanism $\chi$ to be FNPW.

\begin{definition} {\bf Sufficient condition for no superadditive price increase with withdrawal (S-NSAW).} 
Let $O$ be an
	arbitrary set of agents. S-NSAW holds if we have both of the following conditions:
	\begin{itemize}
		\item {\bf Discounts for larger bundles (DLB).}
	$\forall S_1,S_2\subseteq G$ with $S_1\cap S_2=\emptyset$, 
$\chi(S_1,O)+\chi(S_2,O)\ge \chi(S_1\cup S_2,O)$.
	That is, the sum of the prices of two disjoint sets of items must be at least the price of the joint set.
\item {\bf Prices increase with agents (PIA).}
%\footnote{This is the same PIA condition as the one in Section~\ref{sec:nsaw}.} 
	$\forall S\subseteq G$, for any agent $a$ that is not in $O$, 
	$\chi(S,O\cup\{a\})\ge \chi(S,O)$.
	That is, from the perspective of agent $i$, if another agent joins in, then
	the price $i$ faces for any set of items  must (weakly) increase.
	\end{itemize}
\end{definition}
	
\begin{proposition} Mechanism $\chi$ is FNPW if
	$\chi$ satisfies S-NSAW.
	\label{prop:s-nsaw}
\end{proposition}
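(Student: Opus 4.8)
The plan is to avoid arguing about utilities and withdrawals directly, and instead reduce everything to the price-level characterization already in hand. By Theorem~\ref{th:NSAW}, $\chi$ is FNPW if and only if it satisfies NSAW, so it suffices to show that S-NSAW (DLB together with PIA) implies NSAW. Concretely, I would fix an arbitrary set of agents $O$, two nonintersecting subsets $Y,Z\subseteq O$, and the bundles $B_i$ ($i\in Y$) produced by running $\chi$ on $O$, and then establish the NSAW inequality $\sum_{i\in Y}\chi(B_i,O-\{i\})\ge \chi(\bigcup_{i\in Y}B_i,O-Y-Z)$.

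The first step is to note that the bundles $\{B_i\}_{i\in Y}$ are pairwise disjoint: since $\chi$ is a feasible (hence PORF) mechanism, it never allocates the same item to two distinct agents, so the sets obtained by different members of $Y$ do not overlap. With disjointness secured, the two halves of S-NSAW play complementary roles. PIA is used to unify the conditioning sets: for each $i\in Y$, the set $O-\{i\}$ contains $O-Y-Z$ (it removes only $i$, whereas $O-Y-Z$ removes all of $Y\cup Z$), so repeatedly applying PIA gives $\chi(B_i,O-\{i\})\ge \chi(B_i,O-Y-Z)$ for every $i\in Y$. DLB is then used to merge the now uniformly conditioned, pairwise-disjoint bundles: iterating the two-set inequality $\chi(S_1,O-Y-Z)+\chi(S_2,O-Y-Z)\ge \chi(S_1\cup S_2,O-Y-Z)$ across the members of $Y$ yields $\sum_{i\in Y}\chi(B_i,O-Y-Z)\ge \chi(\bigcup_{i\in Y}B_i,O-Y-Z)$. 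Chaining the PIA bound into the DLB bound produces exactly the NSAW inequality, and the theorem then delivers FNPW.

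The individual steps are routine, so the only point that genuinely requires care is the direction of PIA. We need each term $\chi(B_i,O-\{i\})$ to be \emph{lower}-bounded, and PIA (``prices increase with agents'') supplies precisely this, because $O-\{i\}$ is the \emph{larger} agent set and $O-Y-Z$ the smaller one; had the desired inequality pointed the other way, this combination of DLB and PIA would not close. Thus the crux is simply verifying the set inclusion $O-Y-Z\subseteq O-\{i\}$ for every $i\in Y$, which holds since $i\in Y$ and $Y,Z$ are disjoint, and then confirming that DLB may be instantiated with the conditioning set $O-Y-Z$ (legitimate because DLB is asserted for an arbitrary set of agents).
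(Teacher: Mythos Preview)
Your proposal is correct and follows essentially the same route as the paper: reduce to Theorem~\ref{th:NSAW} and then show S-NSAW $\Rightarrow$ NSAW by first applying PIA to pass from $O-\{i\}$ down to $O-Y-Z$ termwise, and then applying DLB to merge the (pairwise disjoint) bundles $B_i$ at the common conditioning set $O-Y-Z$. The paper's version is terser (it does not explicitly justify the disjointness of the $B_i$ or the set inclusion underlying the PIA step), but the argument is the same.
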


S-NSAW is a cleaner, but more restrictive condition than NSAW.
(To see why, note that even if DLB does not hold,
NSA may still hold: even if $\chi(S_1,O)+\chi(S_2,O) < \chi(S_1\cup S_2,O)$, it may be the case that 
by putting separate bids on $S_1$ and $S_2$, each of these bids makes the price for the other bundle go up, so that the result is still more expensive than buying $S_1\cup S_2$ as a single bundle.)
We find it easier to
use S-NSAW to prove that a mechanism is FNPW (rather than using the more complex
NSAW condition).\footnote{However, S-NSAW cannot be used to prove that a
mechanism is {\em not} FNPW, because it is a more restrictive condition.} Let us recall the three existing FNP
mechanisms (for general combinatorial auction settings): the Set mechanism, the MB
Mechanism, and the LDS mechanism. %We have already shown that LDS is not FNPW.
With the help of S-NSAW, we can prove that both Set and MB are FNPW.

\begin{proposition}
Both the Set mechanism and the MB mechanism satisfy the S-NSAW condition.
Hence, they are FNPW.
\end{proposition}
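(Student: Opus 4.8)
The plan is to invoke Proposition~\ref{prop:s-nsaw}: once I verify that each mechanism satisfies both DLB and PIA, the conclusion that they are FNPW is immediate. So the entire task reduces to checking these two price inequalities for the PORF price functions of the Set and MB mechanisms. The structural fact I would exploit is that, in both mechanisms, the price offered to an agent for a bundle $S$ against a set of competitors $O$ can be written as a maximum, taken over the competing agents $j\in O$ and over those of $j$'s bids $B'$ (for MB, only $j$'s \emph{minimal} bids) that \emph{conflict} with $S$, meaning $B'\cap S\ne\emptyset$, of a per-bid quantity that does not itself depend on $S$. Casting the price in this ``maximum over conflicting bids'' form is what makes both conditions fall out cleanly.

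For PIA, I would argue that enlarging the competitor set from $O$ to $O\cup\{a\}$ only enlarges the index set over which the maximum is taken: agent $a$'s (minimal) bids are added to the pool of competing bids, and, crucially, the addition of $a$ does not alter any incumbent $j$'s own (minimal) bids, since those depend only on $j$'s valuation. A maximum over a larger set is at least as large, so $\chi(S,O\cup\{a\})\ge\chi(S,O)$, establishing PIA for both mechanisms simultaneously.

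For DLB, the key is that conflict distributes over union: a bid $B'$ conflicts with $S_1\cup S_2$ if and only if it conflicts with $S_1$ or with $S_2$. Hence, for disjoint $S_1,S_2$, the pool of bids defining $\chi(S_1\cup S_2,O)$ is exactly the union of the pools defining $\chi(S_1,O)$ and $\chi(S_2,O)$, and a maximum over a union is the larger of the two maxima. Using nonnegativity of prices (pay-only), I would conclude
\[
\chi(S_1\cup S_2,O)=\max\{\chi(S_1,O),\chi(S_2,O)\}\le\chi(S_1,O)+\chi(S_2,O),
\]
which is DLB, in fact a stronger inequality. Note this argument never requires the underlying valuations to be subadditive, which is exactly what lets it survive on rich domains where raw valuations can exhibit strong complementarities.

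I expect the main obstacle to lie on the MB side: making the ``maximum over conflicting bids'' form precise for the Minimal Bundle mechanism. For Set the competing bids are simply all of the other agents' bundle bids, so the form is transparent. For MB one must unfold the definition of a minimal bundle and confirm that (a) the MB price really is a maximum over conflicts with minimal bundles only, and (b) restricting from ``all bundles'' to ``minimal bundles'' does not disturb the conflict-distributes-over-union step, since minimal bundles are determined by an agent's own valuation and are unaffected by which bundle is being priced. Once the MB price is cast in this form, the PIA and DLB arguments above apply verbatim, and the proposition follows from Proposition~\ref{prop:s-nsaw}.
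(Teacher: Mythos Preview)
Your proposal is correct and is precisely the ``straightforward'' verification the paper alludes to without spelling out: casting each mechanism's PORF price as a maximum over conflicting (for MB, minimal conflicting) competitor bids, then observing that PIA follows because enlarging $O$ enlarges the index set of the maximum, and DLB follows because the conflict relation distributes over union so that $\chi(S_1\cup S_2,O)=\max\{\chi(S_1,O),\chi(S_2,O)\}\le\chi(S_1,O)+\chi(S_2,O)$. The paper gives no further detail, so there is nothing to compare beyond noting that your write-up makes explicit exactly what the paper leaves implicit.
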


The Set mechanism simply combines all the items into a grand bundle. The grand
bundle is then sold in a Vickrey auction.  The MB (Minimal Bundle) mechanism
builds on the concept of minimal bundles. A set of items $S$
($\emptyset\subsetneq S\subset G$) is called a {\em minimal bundle} for agent
$i$ if and only if $\forall S'\subsetneq S$, $v(i,S)>v(i,S')$.  Under the MB
mechanism, the price of a bundle $S$ an agent faces is equal to the highest
valuation value of a bundle, which is minimal and conflicting with $S$.
Generally, MB coincides with Set, because usually the grand bundle is a minimal
bundle for every agent (any smaller bundle usually gives at least slightly
lower utility).  The proof of the above proposition is straightforward.

We will also use S-NSAW to prove that the MMVIP mechanism that we propose
(Section~\ref{sec:mmvip}) is FNPW.  The automated mechanism design technique
for generating FNPW mechanisms (Appendix~\ref{sec:amd}) is also based on
S-NSAW.

\subsection{Characterizing FNPW allocations}

Todo {\em et al.}~\cite{Todo09:Characterizing} gave the following characterization
of the allocation rules of FNP mechanisms.
We recall that $X(\theta_i,\theta_{-i})$ is the set of items that agent $i$ wins
if her reported type is $\theta_i$ and the reported types of the other agents
are $\theta_{-i}$.  To simplify notation, we use $X(\theta_i)$ in
place of $X(\theta_i,\theta_{-i})$ when there is no risk of ambiguity.

\begin{definition}
	{\bf Weak-monotonicity~\cite{Bikhchandani06:Weak}.}
	$X$ is weakly monotone if $\forall \theta_i,\theta_i',\theta_{-i}$,
	we have
	\begin{center}
	$v(\theta_i,X(\theta_i))-
	v(\theta_i,X(\theta_i'))\ge
	v(\theta_i',X(\theta_i))-
	v(\theta_i',X(\theta_i'))$.
	\end{center}
\end{definition}

\begin{definition}
	{\bf Sub-additivity~\cite{Todo09:Characterizing}.}
	%$X$ is sub-additive if 
	$\forall \theta_i$, $\forall \theta_i'$, 
	$\forall \theta_{i1},\theta_{i2},\ldots,\theta_{ik}$,
	$\forall \theta_{i1}',\theta_{i2}',\ldots,\theta_{ik}'$,
	$\forall \theta_{-i}$,
we have	the following:

	\begin{center}

	$X(\theta_i)=
	\bigcup\limits_{l=1}^k X_{+I^k_{-l}}(\theta_{il})$,
	$v(\theta_i',X(\theta_i'))=0$

	$X_{+I^k_{-l}}(\theta_{il}')\supseteq 
	X_{+I^k_{-l}}(\theta_{il})$,
	$v(\theta_{il}',X_{+I^k_{-l}}(\theta_{il}'))=
	v(\theta_{il}',X_{+I^k_{-l}}(\theta_{il}))$

	$\Downarrow$

	$v(\theta_i',X(\theta_i))\le \sum\limits_{l=1}^k v(\theta_{il}',X_{+I^k_{-l}}(\theta_{il}))$.

	Here, $X_{+I^k_{-l}}(\theta_{il})$ is short for 
	$X(\theta_{il},\theta_{-i}\cup(\bigcup\limits_{1\le t\le k,t\ne l}\theta_{it}))$.

	$X_{+I^k_{-l}}(\theta_{il}')$ is short for 
	$X(\theta_{il}',\theta_{-i}\cup(\bigcup\limits_{1\le t\le k,t\ne l}\theta_{it}))$.
	\end{center}
\end{definition}

$X$ is said to be {\em FNP-implementable} if there exists a payment rule $P$ so
that $X$ combined with $P$ constitutes a feasible FNP mechanism.  Todo {\em et
al.}~\cite{Todo09:Characterizing} showed that $X$ is FNP-implementable if and
only $X$ satisfies both 
weak-monotonicity and sub-additivity.

We define allocation rule $X$ to be {\em FNPW-implementable} if there exists a
payment rule $P$ so that $X$ combined with $P$ constitutes a feasible FNPW
mechanism.  
We introduce a third condition called {\em
withdrawal-monotonicity}.
 We prove that $X$ is FNPW-implementable if and only if $X$
satisfies weak-monotonicity, sub-additivity, and 
withdrawal-monotonicity.

\begin{definition} {\bf Withdrawal-monotonicity.}
$\forall \theta_i$, $\forall \theta_{-i}$, $\forall \theta^a$,
$\forall \theta_i^L$,
$\forall \theta_i^U$,
the following holds:

\begin{center}

	$v(\theta_i^L,X(\theta_i^L,\theta_{-i}))=0$,
	$X(\theta_i^U,\theta_{-i}\cup\theta^a)=
	X(\theta_i,\theta_{-i})$

	$\Downarrow$

	$v(\theta_i^L,X(\theta_i,\theta_{-i}))\le v(\theta_i^U,X(\theta_i,\theta_{-i}))$
\end{center}
\end{definition}

\begin{theorem} An allocation rule $X$ is FNPW-implementable if and only if $X$
satisfies weak-monotonicity, sub-additivity, and
withdrawal-monotonicity.  
\label{th:FNPW-implementable}
\end{theorem}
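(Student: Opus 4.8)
The plan is to leverage two facts available above: Todo et al.'s result that an allocation rule is FNP-implementable iff it is weakly monotone and sub-additive, and the observation recorded after Definition~\ref{def:fnpw} that FNPW is equivalent to FNP together with the requirement that adding an agent never increases any incumbent's truthful utility. Since FNPW-implementability implies FNP-implementability, weak-monotonicity and sub-additivity are automatic in the ``only if'' direction and are exactly what Todo et al.\ return in the ``if'' direction; thus the theorem reduces to showing that, given weak-monotonicity and sub-additivity, withdrawal-monotonicity is precisely the allocation-level counterpart of the ``adding an agent cannot help'' condition. I write $u(\theta_i,\theta_{-i})=v(\theta_i,X(\theta_i,\theta_{-i}))-P(\theta_i,\theta_{-i})$ for the truthful utility throughout.

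For necessity, weak-monotonicity and sub-additivity follow from FNP-implementability via Todo et al., so only withdrawal-monotonicity needs work. Fix $\theta_{-i},\theta^a,\theta_i^L,\theta_i^U$ satisfying its hypotheses and let $B=X(\theta_i^U,\theta_{-i}\cup\theta^a)$. Consider an agent whose true type is $\theta_i^L$ and who deviates by submitting a single honored bid $\theta_i^U$ and a single withdrawn bid $\theta^a$ against background $\theta_{-i}$; the honored identity then faces $\theta_{-i}\cup\theta^a$, wins $B$, and pays $\chi(B,\theta_{-i}\cup\theta^a)$, so after discarding $\theta^a$ the manipulator nets $v(\theta_i^L,B)-\chi(B,\theta_{-i}\cup\theta^a)$. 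Because $v(\theta_i^L,X(\theta_i^L,\theta_{-i}))=0$, pay-only and individual rationality force $u(\theta_i^L,\theta_{-i})=0$, so FNPW gives $0\ge v(\theta_i^L,B)-\chi(B,\theta_{-i}\cup\theta^a)$. Since $\theta_i^U$ genuinely wins $B$ against $\theta_{-i}\cup\theta^a$, individual rationality also gives $\chi(B,\theta_{-i}\cup\theta^a)\le v(\theta_i^U,B)$. Chaining the two inequalities yields $v(\theta_i^L,B)\le v(\theta_i^U,B)$, which is exactly withdrawal-monotonicity.

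For sufficiency, weak-monotonicity and sub-additivity give, by Todo et al., a feasible FNP mechanism; its PORF price function $\chi$ is the canonical one determined by $X$, since consumer sovereignty and rich domains make every bundle winnable, pinning $\chi(S,\theta_{-i})$ down as the critical value above which a type single-minded on $S$ wins $S$. By the recorded equivalence it then suffices to show that adding an agent never raises truthful utility, and since agents can be added one at a time I treat a single added $\theta^a$. With $S^{*}=X(\theta_i,\theta_{-i}\cup\theta^a)$, optimality of $S^{*}$ at the original prices gives $u(\theta_i,\theta_{-i})\ge v(\theta_i,S^{*})-\chi(S^{*},\theta_{-i})$, so it is enough to prove the price monotonicity $\chi(S^{*},\theta_{-i})\le\chi(S^{*},\theta_{-i}\cup\theta^a)$. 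Here I invoke withdrawal-monotonicity with single-minded types on $S^{*}$: $\theta_i^{L}$ with value just below $\chi(S^{*},\theta_{-i})$ (so it wins a zero-value bundle against $\theta_{-i}$), $\theta_i^{U}$ with value at or just above $\chi(S^{*},\theta_{-i}\cup\theta^a)$ (so it wins $S^{*}$ against $\theta_{-i}\cup\theta^a$), and any $\phi$ winning $S^{*}$ against $\theta_{-i}$ so that $B=X(\phi,\theta_{-i})=S^{*}=X(\theta_i^{U},\theta_{-i}\cup\theta^a)$. Withdrawal-monotonicity then gives $v(\theta_i^{L},S^{*})\le v(\theta_i^{U},S^{*})$, and letting the two values tend to their respective critical thresholds yields $\chi(S^{*},\theta_{-i})\le\chi(S^{*},\theta_{-i}\cup\theta^a)$, as needed.

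I expect the sufficiency crux to be the main obstacle: withdrawal-monotonicity is phrased purely in terms of bundle valuations, and converting it into ``prices weakly increase when an agent joins'' relies on the critical-value reading of the canonical price function plus the single-minded-type construction, with some care needed at the thresholds (hence passing to values strictly below and above the critical values and taking limits, which also sidesteps tie-breaking). A secondary point to nail down is that the FNP price function furnished by Todo et al.\ really is this canonical threshold function---i.e., that $X$ together with the PORF structure determines $\chi$ uniquely in our setting---which is what licenses applying the price-level conclusion to an honest FNP mechanism rather than to an arbitrarily chosen price function.
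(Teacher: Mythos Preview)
Your proposal is correct and follows essentially the same route as the paper. In both directions you and the paper reduce the question to PIA for the PORF price function associated with $X$: for necessity you chain $v(\theta_i^L,B)\le\chi(B,\theta_{-i}\cup\theta^a)\le v(\theta_i^U,B)$ (the paper inserts the extra step $\chi(B,\theta_{-i})$ in the middle via PIA, you instead get the left inequality straight from the FNPW manipulation), and for sufficiency you derive PIA from withdrawal-monotonicity using single-minded types near the two critical prices---the paper does the contrapositive with $\beta/3$ offsets where you take limits, which is the same argument.
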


The above theorem implies that a necessary condition for a mechanism to be
FNPW is that its allocation rule $X$ satisfies withdrawal-monotonicity.
That is, one way to prove a (FNP) mechanism to be not FNPW is to generate a lot
of test type profiles, and see whether this mechanism's allocation rule
ever violates withdrawal-monotonicity (this process can be computer-assisted). 
If we find one test type profile that violates withdrawal-monotonicity, then
we are sure that the mechanism under discussion is not FNPW.\footnote{\cite{Todo09:Characterizing} proved two mechanisms to be not FNP, by
	presenting type profiles that violate sub-additivity.}

\begin{proposition} 
	\label{prop:LDS}
	The Leveled Division Set (LDS) mechanism~\cite{Yokoo01:Robust} does not
	satisfy \\%latex
	  withdrawal-monotonicity. That is, LDS is not FNPW in general.
\end{proposition}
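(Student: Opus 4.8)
The plan is to falsify \emph{withdrawal-monotonicity} for the LDS allocation rule on a concrete, small instance, and then invoke the necessity direction of Theorem~\ref{th:FNPW-implementable} (equivalently, the stated necessary condition that every FNPW allocation rule is withdrawal-monotone): a single type profile on which LDS violates withdrawal-monotonicity immediately shows that LDS is not FNPW-implementable, hence not FNPW. What must be exhibited is types $\theta_i$, $\theta_i^L$, $\theta_i^U$, an ``extra'' type $\theta^a$, and opponents $\theta_{-i}$ such that (i) $v(\theta_i^L,X(\theta_i^L,\theta_{-i}))=0$, (ii) $X(\theta_i^U,\theta_{-i}\cup\theta^a)=X(\theta_i,\theta_{-i})=:B$, yet (iii) $v(\theta_i^L,B)>v(\theta_i^U,B)$, contradicting the required conclusion $v(\theta_i^L,B)\le v(\theta_i^U,B)$.

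It is cleanest to read the condition through the PORF/price representation. Premises (i) and (ii) say that, facing $\theta_{-i}$, the bundle $B$ is priced too high for $\theta_i^L$ to want it but cheap enough for $\theta_i$ to take it, while \emph{adding} the agent $\theta^a$ \emph{lowers} the price of $B$ enough that even the low-value type $\theta_i^U$ now takes it. Thus the task reduces to finding an LDS instance in which introducing one more agent strictly \emph{decreases} the price another agent faces for some bundle---precisely the failure of the ``prices increase with agents'' behavior that FNPW (beyond FNP) rules out. I would use single-minded agents over two items $A,B$ with the division set $\{AB\}$ (level $0$) and $\{A\},\{B\}$ (level $1$), and engineer the price drop through LDS's coverage constraint $\chi(\{A\})+\chi(\{B\})\ge\chi(\{A,B\})$ (which LDS inherits from being FNP/NSA). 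Concretely: let $\theta_{-i}$ contain a grand-bundle bidder of value $20$ and a weak $B$-bidder of value $3$, so that with no $A$-competition the coverage constraint forces $\chi(\{A\},\theta_{-i})=17$; take $\theta_i^L$ single-minded on $A$ with value $10$ (so it wins $\emptyset$, giving premise~(i)) and $\theta_i$ single-minded on $A$ with value $18$ (so it wins $B=\{A\}$ at price $17$, preferring it to $AB$ at price $20$). Now let $\theta^a$ be a strong $B$-bidder of value $16$: this raises $\chi(\{B\})$ to $16$, relaxing the coverage bound so that $\chi(\{A\},\theta_{-i}\cup\theta^a)$ drops to $4$, whereupon $\theta_i^U$, single-minded on $A$ with value $5$, also wins $B=\{A\}$. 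Premises (i)--(ii) hold, but $v(\theta_i^L,B)=10>5=v(\theta_i^U,B)$, violating (iii).

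The main obstacle is verifying that the LDS rule genuinely produces this price drop, rather than keeping $\chi(\{A\})$ pinned high by a division switch or a fixed reservation. Two pitfalls must be avoided. First, the drop must come from \emph{relaxing the coverage bound on a separately-priced singleton}, not from switching the grand-bundle division into the split division: if LDS sold $A$ only as part of $AB$ while facing $\theta_{-i}$, then $\theta_i$ would win $AB$ rather than $\{A\}$, and adding $\theta^a$ (triggering the split) would make $\theta_i^U$ win $\{A\}\ne AB$, breaking the equality in premise~(ii). Hence the division set and reservation prices must be chosen so that $\{A\}$ is separately winnable \emph{both} with and without $\theta^a$, with only its price changing. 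Second, I must confirm against LDS's exact leveled-selection and tie-breaking rules that the singleton price indeed tracks the coverage lower bound $\chi(\{A,B\})-\chi(\{B\})$ when $A$-competition is absent, so that strengthening $B$-competition lowers it. This verification is mechanical (evaluating LDS on a handful of profiles) but depends on the precise LDS definition; if the candidate numbers above fail to realize the drop under LDS's rules, the same qualitative design---weak own-side competition plus strong sibling competition that relaxes a coverage constraint---can be retuned, or moved to a three-item, three-level division set where the coverage interaction is more pronounced. Once a single violating profile is confirmed, Theorem~\ref{th:FNPW-implementable} yields that LDS is not FNPW.
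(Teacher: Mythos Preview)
Your strategy---exhibit a profile on which LDS violates withdrawal-monotonicity and then invoke Theorem~\ref{th:FNPW-implementable}---is exactly what the paper does, and your reformulation through PORF prices (a failure of PIA: adding an agent lowers some price) is the right lens. The gap is in the instance. First, the ``coverage constraint'' $\chi(\{A\})+\chi(\{B\})\ge\chi(\{A,B\})$ is not how LDS sets prices: that inequality is DLB, which LDS need not satisfy, and NSA (which LDS does satisfy) compares payments across \emph{different} residual profiles $O-\{i_1\}$, $O-\{i_2\}$, $O-\{i_1,i_2\}$ rather than pinning $\chi(\{A\},O)$ to $\chi(\{A,B\},O)-\chi(\{B\},O)$ for a fixed $O$. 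Second, and decisively, with a \emph{single} level-$1$ division $\{(A),(B)\}$ the restricted-VCG component of LDS decouples: the price of $\{A\}$ depends only on competition for $A$ (the dummy and the other $A$-bidders), so injecting a strong $B$-bidder leaves $\chi(\{A\})$ unchanged. No retuning of the numbers will create the drop you need in that skeleton.

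The paper's counterexample uses three items, reserve $1$, and---this is the crucial structural feature your fallback should target---two \emph{overlapping} divisions at the same level, $\{(A,B),(C)\}$ and $\{(A),(B,C)\}$. With $\theta_{-i}$ a single agent bidding $2.2$ on $\{A,B\}$, the restricted VCG (with the additive dummy valuing each item at $1$) prices $\{A\}$ at $1.2$: the ``without $i$'' optimum uses the division $\{(A,B),(C)\}$ (total $3.2$), while giving $i$ the bundle $\{A\}$ forces the complement $\{B,C\}$ onto the dummy (value $2$). Adding $\theta^a$ bidding $2.9$ on $\{B,C\}$ shifts the ``without $i$'' optimum to the other division $\{(A),(B,C)\}$ (total $3.9$), in which $\{A\}$ is already split off, so $i$'s externality for taking $\{A\}$ falls to $3.9-2.9=1.0$. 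Taking $\theta_i,\theta_i^L,\theta_i^U$ single-minded on $\{A\}$ with values $1.3$, $1.1$, $1.05$ then gives $v(\theta_i^L,\{A\})=1.1>1.05=v(\theta_i^U,\{A\})$, violating withdrawal-monotonicity. The phenomenon you were hunting---a complementary newcomer relaxing a singleton's price---is real, but it is driven by the switch between overlapping divisions, not by an NSA-style coverage bound; your two-item setup has only one division at level $1$ and therefore cannot host it.
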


\begin{comment}
%By modifying the sub-additivity condition as follows, we can show that $X$
%corresponds to an FNPW mechanism (after choosing proper payments) if and only if
%$X$ satisfies monotonicity and sub-addtivity-with-withdrawal. The details of
%the proof is omitted.

\begin{definition}
	{\em Sub-Additivity-with-Withdrawal:}
	$X$ is sub-additive if $\forall \theta_i$, for $k, K\in \mathbb{N}$ with $k\le K$, $\forall \theta_{i1},\theta_{i2},\ldots,\theta_{iK}$,
	$\forall \theta_{-i}$,
	such that 
	$X(\theta_i)=
	\cup_{l=1}^k X_{+I^k_{-l}}(\theta_{il})$ (here 
	$X_{+I^k_{-l}}(\theta_{il})$ is short for 
	$X(\theta_{il},\theta_{-i}+\cup_{t=1}^k\theta_{it}-\theta_{il})$) ,
	we have
	$\forall \theta_i'$ with $v(\theta_i',X(\theta_i'))=0$,
	$\forall \theta_{il}'$ ($l=1,2,\ldots,k$) with $X_{+I^k_{-l}}(\theta_{il}')\supseteq 
	X_{+I^k_{-l}}(\theta_{il})$ and $v(\theta_{il}',X_{+I^k_{-l}}(\theta_{il}')=
	v(\theta_{il}',X_{+I^k_{-l}}(\theta_{il})$,
	the following inequality must hold:
	$v(\theta_i',X(\theta_i))\le \sum_{l=1}^k v(\theta_{il}',X_{+I^k_{-l}}(\theta_{il}))$
\end{definitio}
\end{comment}

\section{Restriction on the type space so that VCG is FNPW}
\label{sec:type}

The VCG mechanism~\cite{Vickrey61,Clarke71:Multipart,Groves73:Incentives}
satisfies several nice properties, including efficiency, strategy-proofness,
individual rationality, and the non-deficit property. Unfortunately, as shown by
Yokoo {\em et al.}~\cite{Yokoo04:effect}, the VCG mechanism is not FNP for
general type spaces. One sufficient condition on the type space for the VCG
mechanism to be FNP is as follows:

\begin{definition} {\bf Submodularity~\cite{Yokoo04:effect}.} For any set of
bidders $Y$, whose types are drawn from $\Theta$, $\forall S_1,S_2\subseteq G$,
we have $U(S_1,Y)+U(S_2,Y)\ge U(S_1\cup S_2,Y)+U(S_1\cap S_2,Y)$.  Here,
$U(S,Y)$ is defined as the total utility of bidders in $Y$, if we allocate
items in $S$ to these bidders efficiently.  \end{definition}

That is, if the type space $\Theta$ satisfies the above condition, then the VCG
mechanism is FNP.  In this section, we aim to characterize type spaces for
which VCG is FNPW. We consider restricted type spaces (that make the VCG
mechanism FNPW) in this section. In other sections, unless specified, we assume that the
rich-domain condition holds.

%\begin{definition} {\em Submodularity:} $\forall
%Y\subseteq N$, $\forall S_1,S_2\subseteq G$, we have $U(S_1,Y)+U(S_2,Y)\ge
%U(S_1\cup S_2,Y)+U(S_1\cap S_2,Y)$.  Here $U(S,Y)$ is defined as the total
%utility of bidders in $Y$, if we allocate items in $S$ to these bidders
%efficiently. 
%\end{definition}

%That is, if the reported types of the agents satisfy the above submodularity
%condition, then the VCG mechanism is FNP. However, if we are to decide whether
%or not to apply the VCG mechanism, and we apply it only if it is FNP, then the
%above condition is not so handy, as we can not check whether submodularity is
%satisfied until we finish collecting all the bids. What is more, if we first
%collect all the bids for choosing a mechanism, then an agent may submit false
%bid/false-name bids so that she can join her preferable mechanism. 

%The following condition is a variant of the above submodularity condition (the
%new condition is still called submodularity).  It is easy to see that a type
%space satisfying the following condition makes the VCG mechanism FNP. 

\begin{comment}
We define {\em unit type} $(j,c)$ to be the type for which an agent values
item $j$ at price $c$, and she does not care for the other items.  The unit
types constitute a very natural class of types. 
%If we consider agents with 
Reasonably
general classes of valuation functions ({\em e.g.}, additive, submodular, complementary-free,
unit demand, or gross substitutable), then the unit types are
contained in the type space.
\end{comment}

\begin{theorem} If the type space satisfies the submodularity condition, then
the VCG mechanism is FNPW.  Conversely, if the mechanism is FNPW, and
additionally the type space contains the additive valuations, then the type
space satisfies the submodularity condition.  
\label{th:submodularity}
\end{theorem}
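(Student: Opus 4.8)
The plan is to route both directions through the observation, stated just after Definition~\ref{def:fnpw}, that FNPW is equivalent to FNP together with the monotonicity requirement that an agent's truthful utility does not increase when an extra agent is added. Under the VCG mechanism in its PORF form, the price agent $i$ faces for a bundle $S$ against a set $O$ of other agents is $\chi(S,O)=U(G,O)-U(G\setminus S,O)$, so reporting truthfully gives her utility $u_i(O)=\max_{S\subseteq G}\{v(\theta_i,S)-\chi(S,O)\}=U(G,O\cup\{i\})-U(G,O)$, her marginal contribution to the efficient welfare. Hence the monotonicity condition $u_i(O\cup\{a\})\le u_i(O)$ is exactly submodularity of the set function $O\mapsto U(G,O)$ in the agents, i.e. $U(G,O\cup\{i,a\})+U(G,O)\le U(G,O\cup\{i\})+U(G,O\cup\{a\})$. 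The whole theorem therefore reduces to matching \emph{item}-submodularity of $U(\cdot,Y)$ (the stated condition) with \emph{agent}-submodularity of $U(G,\cdot)$ under the respective hypotheses.

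For the forward direction, FNP already follows from submodularity by Yokoo {\em et al.}~\cite{Yokoo04:effect}, so it remains to prove agent-submodularity. I would fix a base set $O$ and two further agents $i,a$, and let $S_i,S_a$ be the (disjoint) bundles they receive in the efficient allocation for $O\cup\{i,a\}$, so that $U(G,O\cup\{i,a\})=v(\theta_i,S_i)+v(\theta_a,S_a)+U(G\setminus(S_i\cup S_a),O)$. Bounding $U(G,O\cup\{i\})$ and $U(G,O\cup\{a\})$ from below by the feasible (suboptimal) allocations that give $i$ the bundle $S_i$ and $a$ the bundle $S_a$ respectively, the desired inequality reduces to $U(G\setminus S_a,O)+U(G\setminus S_i,O)\ge U(G,O)+U(G\setminus(S_i\cup S_a),O)$, which is precisely item-submodularity with $Y=O$ applied to $A=G\setminus S_a$ and $B=G\setminus S_i$ (note $A\cup B=G$ and $A\cap B=G\setminus(S_i\cup S_a)$ because $S_i,S_a$ are disjoint).

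For the converse I would build additive ``blocker'' agents, which exist in $\Theta$ by the added assumption. Fix a target agent set $Y$ and items $S_1,S_2\subseteq G$. Take a base profile consisting of $Y$ together with, for each item $j\in G\setminus(S_1\cup S_2)$, a blocker valuing only $j$ at a huge $M$; introduce two further additive agents $a$ and $b$ valuing (at $M$ each) exactly the items of $S_2\setminus S_1$ and of $S_1\setminus S_2$ respectively. For $M$ large enough every blocker wins its own items in every efficient allocation, so the welfare of the base profile and of the profiles obtained by adding $a$, $b$, or both equal $U(S_1\cup S_2,Y)$, $U(S_1,Y)$, $U(S_2,Y)$, and $U(S_1\cap S_2,Y)$ respectively, each plus a blocker constant. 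Applying agent-submodularity (which holds because VCG is FNPW) to this base with added agents $a,b$ then yields $U(S_1\cap S_2,Y)+U(S_1\cup S_2,Y)\le U(S_1,Y)+U(S_2,Y)$ once the blocker constants cancel on both sides.

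The routine parts are the PORF price computation and the lower-bounding step in the forward direction. The part requiring the most care is the converse encoding: one must choose $M$ uniformly large enough that no efficient allocation ever diverts a blocked item to $Y$, verify that the blocks $S_2\setminus S_1$, $S_1\setminus S_2$, and $G\setminus(S_1\cup S_2)$ are pairwise disjoint so the blockers never compete, and check that the constants attached to the four scenarios cancel exactly so that only the four $U(\cdot,Y)$ terms survive. This constant-cancellation is the crux of the argument.
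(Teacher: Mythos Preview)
Your proof is correct and takes a genuinely different route from the paper's. The paper argues both directions directly at the level of VCG prices: for the forward direction it fixes a manipulator with kept identities $K$ and withdrawn identities $W$, collapses $K$ to a single truthful report via FNP, and then shows---by unpacking the VCG price for the target bundle $S$ and identifying the set $S'$ that $W$ would win out of $G\setminus S$---that withdrawn identities can only raise the price, applying item-submodularity at the last step; for the converse it has the manipulator attach a withdrawn single-item bidder for each $j\in S'$ (disjoint from $S$), computes the resulting VCG price, and reads off the submodular inequality from the requirement that this price not drop. Your route instead exploits that the truthful VCG utility against $O$ is the marginal welfare $U(G,O\cup\{i\})-U(G,O)$, so the ``utility does not rise when an agent is added'' half of FNPW is literally agent-submodularity of $O\mapsto U(G,O)$; both directions then become short reductions between item- and agent-submodularity. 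This buys a cleaner structural picture and, in the converse, delivers the submodular inequality for \emph{arbitrary} $S_1,S_2$ (the paper's substitution $S_1=G\setminus S$, $S_2=G\setminus S'$ with $S\cap S'=\emptyset$ forces $S_1\cup S_2=G$, so as written it only yields the inequality for pairs whose union is the whole ground set). The paper's argument, on the other hand, is more operational---it exhibits the concrete false-name-with-withdrawal manipulation that would succeed were submodularity violated---and uses fewer auxiliary agents in the converse construction.
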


That is, submodularity does not only imply FNP, it actually implies FNPW.
Moreover, unlike for FNP, in the case of FNPW, the converse also holds---if we
allow the additive valuations.
%(those valuations which value any set of items at
%the sum of the values of its elements, with no complementarity and no substitutability).

\section{Worst-Case Efficiency Ratio of FNPW Mechanisms}
\label{sec:worst}

Yokoo {\em et al.}~\cite{Yokoo04:effect} proved that in general combinatorial
auction settings, there exists no efficient FNP mechanisms.  
Iwasaki {\em et al.}~\cite{Iwasaki10:Worst} further showed that, under a
minor condition called IIG (described below), the worst-case efficiency
ratio of any feasible FNP mechanism is at most
$\frac{2}{m+1}$.\footnote{Iwasaki {\em et al.}~\cite{Iwasaki10:Worst} also introduced the ARP
  mechanism, whose worst-case efficiency ratio is exactly
  $\frac{2}{m+1}$. However, the ARP mechanism is only FNP for single-minded
agents.
Our next result implies that ARP is not FNPW, even with single-minded
bidders.
}

\begin{definition} {\bf Independence of irrelevant good
(IIG)~\cite{Iwasaki10:Worst}.}  Suppose agent $i$ is winning all the
  items. If we add an additional item that is only wanted by $i$, then $i$
  still wins all the items.
\end{definition}

Given the agents' reported types, the efficiency ratio of a mechanism is
defined as the ratio between the achieved allocative efficiency and the optimal
allocative efficiency (payments are not taken into consideration). The
worst-case efficiency ratio of this mechanism is the minimal such ratio
over all possible type profiles.

\begin{example} {\em The worst-case efficiency ratio of the Set mechanism is at
	least $\frac{1}{m}$~\cite{Iwasaki10:Worst}.} 
Let $v$ be the winning agent's valuation
  for the grand bundle.  
The allocative efficiency of the Set mechanism is $v$.  The optimal
allocative efficiency is at most $mv$, since there are at most $m$ winners
in the optimal allocation, and a winner's valuation (for the items she won)
is at most $v$.
\end{example}

Our next theorem is that $\frac{1}{m}$ is a strict upper bound
on the efficiency ratios of feasible FNPW mechanisms.  That is, the Set
mechanism is worst-case optimal in terms of efficiency ratio.
Of course, this is only a worst-case analysis, which does not
preclude FNPW mechanisms from performing well most of the time.

\begin{theorem}
The worst-case efficiency ratio of any feasible
FNPW mechanism is at most $\frac{1}{m}$, if IIG holds, even with
single-minded bidders.
\label{th:1overm}
\end{theorem}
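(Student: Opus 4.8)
The plan is to establish the upper bound by exhibiting, for an arbitrary feasible FNPW mechanism that satisfies IIG, a single-minded type profile whose efficiency ratio is at most $\frac{1}{m}$; since the Set mechanism attains exactly $\frac{1}{m}$ (the Example immediately preceding this theorem), this simultaneously shows the bound is tight and that Set is worst-case optimal. I would work in the PORF representation, so the mechanism is a price function $\chi$, and use three tools: Theorem~\ref{th:NSAW} (so that NSAW holds), individual rationality together with the rationing-free property (an agent receives any bundle that is strictly utility-maximizing at the posted prices), and the equivalence noted after Definition~\ref{def:fnpw} that FNPW is FNP plus ``an agent's truthful utility does not increase when another agent is added.''

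The construction mirrors the one Iwasaki \emph{et al.}~\cite{Iwasaki10:Worst} use for the $\frac{2}{m+1}$ bound, strengthened by the extra FNPW requirement. I would take one \emph{bundle} agent $a$, single-minded on the grand bundle $G$ with value $v$, together with $m$ \emph{single-item} agents $h_1,\ldots,h_m$, where $h_j$ is single-minded on $\{j\}$ with value $v$. The optimal allocation gives each $h_j$ its item, for welfare $mv$, whereas giving $a$ the grand bundle yields only $v$; hence the optimal welfare is $mv$, and it suffices to show the mechanism realizes at most $v$. Two observations drive this. First, if $a$ wins $G$, then every $h_j$ loses, the realized welfare is exactly $v$, and the ratio is already $\frac{1}{m}$. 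Second, if $a$ does not win $G$, then, since $a$ is single-minded on $G$ and the mechanism is rationing-free, the posted price must satisfy $\chi(G,\{h_1,\ldots,h_m\})\ge v$. I would combine this price lower bound with the NSAW inequality applied to the set $Y$ of winning single-item agents, taking the withdrawal set $Z$ to be the remaining agents (including $a$), to bound the total value the single-item agents can realize and again force the realized welfare down to at most $v$. The utility-monotonicity half of FNPW is used to argue that removing agents cannot turn a winner into a loser, so that I may pass between the full profile and its sub-profiles while controlling who wins.

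The step I expect to be the main obstacle is closing the second case quantitatively---ruling out that several single-item agents win at prices low enough to push the realized welfare strictly above $v$---and this is exactly where IIG is needed. IIG lets me grow a single agent's winnings up to the full grand bundle one item at a time, so that I can relate the price $a$ faces for $G$ in the presence of the $m$ competitors to the individual singleton prices the $h_j$ pay, and thereby convert the lower bound $\chi(G,\cdot)\ge v$ together with the NSAW upper bounds on summed singleton prices into the conclusion that at most $v$ of welfare is realizable. The delicate points are handling ties (the rationing-free caveat noted for PORF mechanisms) and the symmetry over items, and choosing the common value $v$ so that the optimal allocation is the $m$-way split while the mechanism is pinned to a single winner's worth of value. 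Once realized welfare $\le v$ and optimal welfare $=mv$ are both in hand, the ratio is at most $\frac{1}{m}$, which completes the proof.
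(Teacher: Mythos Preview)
Your Case~2 does not close as written. With $Y$ the winning single-item agents and $Z$ the remaining agents \emph{including} $a$, the NSAW inequality becomes $\sum_{j\in Y}\chi(\{j\},O\setminus\{h_j\})\ge \chi\bigl(\bigcup_{j\in Y}\{j\},\emptyset\bigr)=0$, which is vacuous. If instead you keep $a$ on the right-hand side, you need a lower bound on $\chi\bigl(\bigcup_{j\in Y}\{j\},\{a\}\bigr)$, but the only lower bound you have derived is on $\chi(G,\{h_1,\ldots,h_m\})$---a different bundle facing a different set of agents---and nothing in your sketch bridges the two. In particular, NSAW bounds a bundle price by a \emph{sum} of sub-bundle prices, which goes the wrong way for forcing the individual $h_j$ prices to be high. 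Your description of IIG (``grow a single agent's winnings up to the full grand bundle one item at a time'') also does not match what IIG actually lets you do in your profile: once the item you add is wanted by some $h_j$, the hypothesis of IIG fails.

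The paper's argument supplies exactly the missing price lower bound, and it comes from a scenario you omit. Give the bundle bidder value $1$ and each single-item bidder value $1-\epsilon$ (this also removes your tie problem). First consider just two agents with the \emph{same} single-item type $\theta_1$ in the one-item auction: if $\chi(\{s_1\},\{\theta_1\})>1-\epsilon$ neither can afford the item and the efficiency ratio is $0$, so we may assume $\chi(\{s_1\},\{\theta_1\})\le 1-\epsilon$. Hence in the one-item auction with $\{a,h_1\}$, agent $a$ (value $1$) wins. Now IIG is used correctly: add items $s_2,\ldots,s_m$, each wanted only by $a$ (since $h_1$ is single-minded on $\{s_1\}$), so $a$ still wins everything; consequently $\chi(\{s_1\},\{\theta_a\})\ge 1-\epsilon$, and letting $\epsilon\to 0$ gives $\chi(\{s_1\},\{\theta_a\})\ge 1$. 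Finally, PIA (the $|Y|=1$ case of NSAW, equivalently the ``adding an agent never helps'' half of FNPW) gives $\chi(\{s_1\},\{\theta_a\}\cup\{\theta_j\}_{j\ne 1})\ge 1>1-\epsilon$, so $h_1$ cannot win in the full profile; by symmetry no $h_j$ wins, and the realized welfare is at most $1$ against an optimum of $m(1-\epsilon)$. The crucial ingredients you are missing are the two-identical-bidders step that pins down a singleton price, and the use of IIG to pass from $m=1$ to general $m$ in the two-agent sub-profile (not in the full profile).
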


\section{Maximum Marginal Value Item Pricing Mechanism}
\label{sec:mmvip}

In this section, we introduce a new FNPW mechanism. We recall
that S-NSAW is a sufficient condition for FNPW.  Basically, if a mechanism
satisfies discounts for larger bundles (DLB) and prices increase with agents
(PIA), then we know it is FNPW. Any mechanism that uses {\em item pricing}
satisfies DLB. If the item prices an agent faces also increase with the agents, then
we have a mechanism that also satisifies PIA. MMVIP builds on exactly this item pricing idea.

\begin{definition} {\bf Maximum marginal value item pricing mechanism (MMVIP).}
	Let $O$ be an arbitrary set of agents.
	MMVIP is characterized by the following price function $\chi$.

	\begin{itemize}
		\item
	$\forall S\subseteq G$, $\chi(S,O)=\sum_{s\in
	S}\chi(\{s\},O)$. That is, $\chi$ uses {\em item pricing}.\footnote{It should be noted that the item prices
	faced by different agents are generally different.}

		\item $\forall s\in G$, $\chi(s,O)=\max\limits_{j\in
			O}\max\limits_{S\subseteq G-\{s\}}\{v(j, S\cup\{s\})-v(j,S)\}$.\footnote{In this notation, we 
			assume that	the maximum over an empty set is $0$ (for presentation purpose).  Such notation will also appear later in the paper.
			}
		%That is, if $O=\emptyset$, $\chi(s,O)=0$.}
			That
			is, the price
			an agent faces for an item is the maximum possible marginal value that any other agent could have for that item, where the maximum is taken over all possible allocations.
	\end{itemize}
\end{definition}

\begin{proposition}
	MMVIP is feasible and FNPW.
	\label{prop:mmvip}
\end{proposition}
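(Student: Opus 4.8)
The plan is to prove the two assertions separately: FNPW will follow from the sufficient condition S-NSAW via Proposition~\ref{prop:s-nsaw}, while feasibility will be checked directly against the PORF requirements, with the only real work being the guarantee that no item is allocated twice.

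For FNPW, I would verify that MMVIP satisfies both ingredients of S-NSAW. Discounts for larger bundles (DLB) is immediate and in fact holds with equality: since $\chi(S,O)=\sum_{s\in S}\chi(\{s\},O)$, for disjoint $S_1,S_2$ we get $\chi(S_1,O)+\chi(S_2,O)=\sum_{s\in S_1\cup S_2}\chi(\{s\},O)=\chi(S_1\cup S_2,O)$, so item pricing trivially gives (sub-)additivity. For prices increase with agents (PIA), by item pricing it suffices to show $\chi(\{s\},O\cup\{a\})\ge\chi(\{s\},O)$ for each $s$. But $\chi(\{s\},\cdot)$ is a maximum of marginal values taken over the agents in its second argument, so enlarging $O$ to $O\cup\{a\}$ only enlarges the index set of that outer maximum and the value can only weakly increase. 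Hence both DLB and PIA hold, S-NSAW is satisfied, and FNPW follows from Proposition~\ref{prop:s-nsaw}.

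For feasibility I would first dispatch the routine PORF requirements. The price function is nonnegative: taking the inner set to be $\emptyset$ and using free disposal gives $\chi(\{s\},O)\ge v(j,\{s\})\ge 0$, and the empty sum gives $\chi(\emptyset,O)=0$; monotonicity in the bundle is then immediate from item pricing with nonnegative item prices. Each item price is a maximum over finitely many finite marginal values and is therefore finite, so consumer sovereignty holds on the rich domain: a type valuing every superset of a target bundle $B$ above $\chi(G,\theta_{-i})$ (and everything else at $0$) is allocated a superset of $B$. Anonymity, symmetry, and determinism are visible from the symmetric form of $\chi$, and once the allocation is shown feasible, strategy-proofness, pay-only, and individual rationality come for free from the PORF representation.

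The substantive step, which I expect to be the main obstacle, is showing that the induced allocation can be made feasible, i.e., that no item is handed to two agents. Here I would exploit the marginal-value structure of the price. Write $m_j^s=v(\theta_j,B_j)-v(\theta_j,B_j-\{s\})$ for the realized marginal value of $s$ to agent $j$ inside her chosen bundle $B_j$, and $M_j^s$ for $j$'s maximum marginal value for $s$. Suppose $s\in B_i\cap B_{i'}$ for distinct $i,i'$. Optimality of $B_i$ under item pricing forces $m_i^s\ge\chi(\{s\},\theta_{-i})\ge M_{i'}^s\ge m_{i'}^s$, since the price one agent faces for $s$ dominates every other agent's maximum marginal value for $s$; symmetrically $m_{i'}^s\ge\chi(\{s\},\theta_{-i'})\ge M_i^s\ge m_i^s$. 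Chaining these collapses the whole string to equalities, so each of $i,i'$ faces a price for $s$ exactly equal to her realized marginal value and is therefore indifferent to surrendering $s$. Thus a consistent tie-breaking rule can award $s$ to a single agent without reducing anyone's utility; since this reallocation only removes items and creates no new conflicts, iterating over contested items terminates in a feasible allocation. This completes the feasibility argument and, together with the S-NSAW argument, establishes that MMVIP is feasible and FNPW.
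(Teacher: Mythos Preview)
Your proposal is correct and takes essentially the same approach as the paper: FNPW via S-NSAW (DLB from item pricing with equality, PIA from monotonicity of the outer maximum in the agent set), and feasibility by showing that whenever two agents both select an item, each agent's realized marginal value for it coincides with the price she faces, so both are indifferent to dropping it and tie-breaking suffices. Your write-up is somewhat more thorough than the paper's in spelling out the routine PORF requirements and in noting that iterating the tie-break over contested items terminates without creating new conflicts, but the substantive ideas are identical.
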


Next, we prove two properties of the MMVIP mechanism.

\begin{proposition} 
\label{prop:mmvipvcg}
Suppose we restrict the domain to additive valuations.  Then, MMVIP
coincides with the VCG mechanism, so that MMVIP=VCG is FNPW and efficient.
\end{proposition}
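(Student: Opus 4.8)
The plan is to prove the two asserted claims in Proposition~\ref{prop:mmvipvcg} in sequence: first that MMVIP coincides with VCG on the additive domain, and then that this common mechanism is FNPW and efficient. For the coincidence, I would restrict attention to additive valuations, where each agent $j$'s type is specified by values $v(j,\{s\})$ for single items and $v(j,S)=\sum_{s\in S}v(j,\{s\})$. On such a domain every marginal value $v(j,S\cup\{s\})-v(j,S)$ collapses to $v(j,\{s\})$, independently of $S$. Hence the inner maximization over $S\subseteq G-\{s\}$ in the MMVIP price definition becomes vacuous, and the MMVIP item price reduces to $\chi(\{s\},O)=\max_{j\in O}v(j,\{s\})$. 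Because MMVIP uses item pricing, an agent facing these prices will purchase exactly those items $s$ for which her own value strictly exceeds the maximum value among the other agents (breaking ties by symmetry), paying the second-highest value on each item she wins.

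The next step is to observe that this is precisely the outcome of VCG on additive domains. With additive valuations the VCG allocation assigns each item independently to a highest-valuing agent, and the VCG payment for agent $i$ on item $s$ is the externality she imposes, which on the additive domain equals the second-highest value $\max_{j\neq i}v(j,\{s\})$ for each item she wins. I would verify that the allocation and per-item payment computed by MMVIP match those of VCG item by item; since both mechanisms are additive across items and symmetric, matching on each item gives equality of the mechanisms overall (modulo tie-breaking, which the paper has already flagged as a non-issue).

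Finally I would dispatch the FNPW and efficiency claims. Efficiency is immediate once the mechanisms coincide: VCG is efficient, so MMVIP=VCG achieves the optimal allocative efficiency on the additive domain. For FNPW, there are two routes. The cleanest is to invoke Theorem~\ref{th:submodularity}: additive valuations are submodular (indeed modular), so the submodularity condition on the type space holds, and the theorem yields that VCG is FNPW. Alternatively, and perhaps more directly given the section's theme, I would invoke Proposition~\ref{prop:s-nsaw}: MMVIP is established to be FNPW there via S-NSAW, and since the additive domain is a sub-domain, MMVIP remains S-NSAW and hence FNPW on it. I would use the submodularity route as the primary argument since it ties the efficiency and FNPW conclusions together through a single previously-proven theorem.

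The main obstacle I anticipate is not any deep argument but rather the careful handling of the reduction of the marginal-value maximization on the additive domain, together with the tie-breaking bookkeeping needed to assert literal equality of the two mechanisms' allocations; I would keep this light, relying on the paper's stated convention that tie-breaking is generally not an issue and on the additive-separability that makes the per-item analysis decompose cleanly.
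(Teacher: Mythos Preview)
Your proposal is correct and follows essentially the same approach as the paper: both collapse the MMVIP marginal-value maximization to $\max_{j\in O}v(j,\{s\})$ on the additive domain and identify the result with $m$ separate Vickrey auctions, which is exactly VCG on additive valuations. The paper's proof is terser and leaves the FNPW and efficiency conclusions implicit once coincidence is established; your added justification via Theorem~\ref{th:submodularity} is a valid elaboration, though note that the direct route through MMVIP's own FNPW property is Proposition~\ref{prop:mmvip}, not Proposition~\ref{prop:s-nsaw}.
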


The above proposition essentially says that, when the agents' valuations are
additive, MMVIP ``does the right thing.''  MMVIP is the only known FNP/FNPW
mechanism with the above property for general combinatorial auctions.
Finally, we have the following proposition about MMVIP.

%Before moving on to the other property that we prove about MMVIP, we first
%experimentally compare the revenue and allocative efficiency of the MMVIP
%mechanism and the Set mechanism, under the assumption that the agents'
%valuations are additive.\footnote{Under this assumption, the VCG mechanism
%  coincides with the MMVIP mechanism. We also have that the MB mechanism
%  and the Set mechanism coincide. (In our experimental setup, the grand
%  bundle is always a minimal bundle for every agent.)} We assume that there are
%$100$ items and $100$ agents. An agent's valuation for an item is drawn
%i.i.d.~from $U(0,1)$ (the uniform distribution from $0$ to $1$).  
%%The results are presented below.
%The results
%are presented below (the numbers shown are average over $10000$ instances):

%\begin{center}
%\begin{tabular}{|c|c|c|}
%	\hline
%& MMVIP & Set \\
%\hline
%	Revenue & $98.02$ & $56.19$\\
%\hline
%	Efficiency & $99.01$ & $57.22$\\
%	\hline
%\end{tabular}
%\end{center}

%\begin{center}
%\begin{tabular}{|c|c|c|}
%	\hline
%	& MMVIP & Set \\
%	\hline
%	Revenue & $?$ & $?$\\
%	\hline
%	Efficiency & $?$ & $?$\\
%	\hline
%\end{tabular}
%\end{center}

\begin{proposition} Among all FNPW mechanisms that use item pricing, MMVIP has
minimal payments.  That is, let $\chi$ be the price function of MMVIP. Let
$\chi'$ be a different price function
%vc: made it explicit that the price function is different (mechanisms
%could have just been different in tiebreaking for allocation)
 corresponding to a different FNPW
mechanism $M$ that also uses item pricing.
% and $M$ is not the MMVIP mechanism.  
We have that there always
exists a set of items $S$ and a set of agents $O$, so that
$\chi'(S,O)>\chi(S,O)$.  
\label{prop:mmvipminimal}
\end{proposition}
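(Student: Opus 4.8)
The plan is to prove the contrapositive in a sharpened form: I will show that MMVIP's price function $\chi$ is the unique pointwise‑minimal feasible FNPW item‑pricing function. Concretely, I assume $\chi'$ is a feasible FNPW item‑pricing mechanism with $\chi'(S,O)\le\chi(S,O)$ for \emph{all} $S$ and $O$, and I aim to conclude $\chi'=\chi$; the claim that some $\chi'(S,O)>\chi(S,O)$ whenever $\chi'\neq\chi$ is exactly the contrapositive of this. Because both $\chi$ and $\chi'$ use item pricing, everything reduces to single items: it suffices to prove $\chi'(\{s\},O)\ge\chi(\{s\},O)$ for every item $s$ and agent set $O$, since together with the standing hypothesis this gives equality on singletons, and additivity then lifts equality to all bundles.

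Next I would reduce to a single competitor. For item pricing, FNPW forces \textbf{PIA}: feeding the mechanism a single‑minded agent on $\{s\}$ with arbitrarily large value and invoking the characterization of FNPW as FNP together with the requirement that adding an agent never increases an agent's truthful utility (the remark after Definition~\ref{def:fnpw}) yields $\chi'(\{s\},O)\ge\chi'(\{s\},\{j\})$ for each $j\in O$. Since by definition $\chi(\{s\},O)=\max_{j\in O}\chi(\{s\},\{j\})=\max_{j\in O}\max_{S\subseteq G-\{s\}}\{v(j,S\cup\{s\})-v(j,S)\}$, it is enough to establish the single‑competitor bound $\chi'(\{s\},\{j\})\ge\max_{S\subseteq G-\{s\}}\{v(j,S\cup\{s\})-v(j,S)\}=:w$ and then take the maximum over $j\in O$.

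The heart of the argument is a feasibility contradiction built from one auxiliary agent. Fix the witness $S^{*}$ attaining $w$ and suppose for contradiction that $\chi'(\{s\},\{j\})<w$; choose $V$ with $\chi'(\{s\},\{j\})<V<w$. I introduce an \emph{additive} agent $a$ that values $s$ at $V$, values every item of $T:=G-S^{*}-\{s\}$ at some huge $M$ (larger than all relevant MMVIP item prices), and values every item of $S^{*}$ at $0$, and I run $\chi'$ on the profile $\{a,j\}$. On the one hand, additivity plus item pricing makes $a$'s demand separable, so $a$ strictly demands every item of $\{s\}\cup T$ (each has strictly positive net value at the prices $\chi'(\cdot,\{j\})$). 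On the other hand, the domination hypothesis collapses the prices $j$ faces: since $a$ has zero marginal value for $S^{*}$, $\chi'(\{t\},\{a\})\le\chi(\{t\},\{a\})=0$ for $t\in S^{*}$, so $S^{*}$ is free for $j$, while $\chi'(\{s\},\{a\})\le\chi(\{s\},\{a\})=V$. Consequently $S^{*}\cup\{s\}$ gives $j$ utility at least $v(j,S^{*}\cup\{s\})-V=v(j,S^{*})+(w-V)$, strictly exceeding $v(j,S^{*})$, which is the best utility $j$ can get from any subset of the free set $S^{*}$. Hence every optimal bundle for $j$ must contain an item of $\{s\}\cup T$, and since $a$ demands \emph{all} of $\{s\}\cup T$, the two agents necessarily collide on a common item, contradicting feasibility (Section~\ref{sec:formalization}).

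I expect the construction of the auxiliary agent $a$ to be the main obstacle. The naive choice of a single‑minded‑on‑$s$ agent fails whenever $j$ has substitutes outside $S^{*}$: making all of $G-\{s\}$ free lets $j$ grab everything and then value $s$ only at its possibly small top‑of‑bundle marginal value rather than at $w$. The fix is to have $a$ additively blanket the extraneous items $T$ at a huge value, so that any deviation of $j$ \emph{away} from the free witness set $S^{*}$ is forced to collide with $a$ on either $s$ or some item of $T$; this pins $j$'s decision to the marginal value $w$ and makes the domination hypothesis do its work. Once $\chi'(\{s\},\{j\})\ge w$ is in hand, the conclusion is immediate: PIA and the maximum over $j\in O$ give $\chi'(\{s\},O)\ge\chi(\{s\},O)$ for all $s,O$, the standing hypothesis forces equality on singletons, and additivity of item pricing yields $\chi'=\chi$, which is the contrapositive of the proposition.
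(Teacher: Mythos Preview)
Your proof is correct and follows essentially the same route as the paper: assume $\chi'\le\chi$ pointwise, pick an item $s$ and a witness agent $j$ with marginal value $w$ on $s$ at some $S^{*}$, and build an additive auxiliary agent that values $s$ at an intermediate amount, values the extraneous items $T=G\setminus(S^{*}\cup\{s\})$ hugely, and values $S^{*}$ at zero, so that both agents must collide under $\chi'$. Your organization (first reduce via PIA to a single competitor, then run the two-agent contradiction) is slightly different from the paper's, and your argument that $j$'s optimal bundle must intersect $\{s\}\cup T$ is in fact cleaner than the paper's claim that $j$ specifically wins $s$.
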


\section{Conclusion} 

We studied a more powerful variant of false-name manipulation: an agent can
submit multiple false-name bids, but then, once the allocation and payments
have been decided, withdraw some of her false-name identities.  Since FNPW
is stronger than FNP, this paper also contributes to the research on
false-name-proofness in the traditional sense.

\newpage

\bibliography{mg.bib}

\begin{thebibliography}{10}

\bibitem{Bikhchandani06:Weak}
S.~Bikhchandani, S.~Chatterji, R.~Lavi, A.~Mu'alem, N.~Nisan, and A.~Sen.
\newblock Weak monotonicity characterizes deterministic dominant strategy
  implementation.
\newblock {\em Econometrica}, 74(4):1109--1132, 2006.

\bibitem{Clarke71:Multipart}
E.~H. Clarke.
\newblock Multipart pricing of public goods.
\newblock {\em Public Choice}, 11:17--33, 1971.

\bibitem{Cramton97:The}
P.~Cramton.
\newblock The fcc spectrum auctions: An early assessment.
\newblock Technical report, University of Maryland, Department of Economics,
  1997.

\bibitem{Groves73:Incentives}
T.~Groves.
\newblock Incentives in teams.
\newblock {\em Econometrica}, 41:617--631, 1973.

\bibitem{Holland05:Robust}
A.~Holland and B.~O'Sullivan.
\newblock Robust solutions for combinatorial auctions.
\newblock In {\em Proceedings of the 6th ACM conference on Electronic
  commerce}, Proceedings of the ACM Conference on Electronic Commerce (EC),
  pages 183--192, 2005.

\bibitem{Iwasaki10:Worst}
A.~Iwasaki, V.~Conitzer, Y.~Omori, Y.~Sakurai, T.~Todo, M.~Guo, and M.~Yokoo.
\newblock Worst-case efficiency ratio in false-name-proof combinatorial auction
  mechanisms.
\newblock In {\em Proceedings of the Ninth International Joint Conference on
  Autonomous Agents and Multi-Agent Systems (AAMAS)}, pages 633--640, Toronto,
  Canada, 2010.

\bibitem{Lavi03:Towards}
R.~Lavi, A.~Mu'alem, and N.~Nisan.
\newblock Towards a characterization of truthful combinatorial auctions.
\newblock In {\em Proceedings of the Annual Symposium on Foundations of
  Computer Science (FOCS)}, pages 574--583, 2003.

\bibitem{Monderer98:Optimal}
D.~Monderer and M.~Tennenholtz.
\newblock Optimal auctions revisited.
\newblock In {\em Proceedings of the National Conference on Artificial
  Intelligence (AAAI)}, pages 32--37, Madison, WI, USA, July 1998.

\bibitem{Myerson81:Optimal}
R.~Myerson.
\newblock Optimal auction design.
\newblock {\em Mathematics of Operations Research}, 6:58--73, 1981.

\bibitem{Porter99:The}
D.~P. Porter.
\newblock The effect of bid withdrawal in a multi-object auction.
\newblock {\em Review of Economic Design}, 4(1):73--97, 1999.

\bibitem{Rothkopf91:On}
M.~H. Rothkopf.
\newblock On auctions with withdrawable winning bids.
\newblock {\em Marketing Science}, 10(1):40--57, 1991.

\bibitem{Sandholm96:Vickrey}
T.~Sandholm.
\newblock Limitations of the {V}ickrey auction in computational multiagent
  systems.
\newblock In {\em Proceedings of the Second International Conference on
  Multi-Agent Systems (ICMAS)}, pages 299--306, Keihanna Plaza, Kyoto, Japan,
  Dec. 1996.

\bibitem{Todo09:Characterizing}
T.~Todo, A.~Iwasaki, M.~Yokoo, and Y.~Sakurai.
\newblock Characterizing false-name-proof allocation rules in combinatorial
  auctions.
\newblock In {\em Proceedings of the Eighth International Joint Conference on
  Autonomous Agents and Multi-Agent Systems (AAMAS)}, pages 265--272, 2009.

\bibitem{Vickrey61}
W.~Vickrey.
\newblock Counterspeculation, auctions, and competitive sealed tenders.
\newblock {\em Journal of Finance}, 16:8--37, 1961.

\bibitem{Wurman98:Michigan}
P.~Wurman, M.~Wellman, and W.~Walsh.
\newblock The {M}ichigan {I}nternet {A}uction{B}ot: A configurable auction
  server for human and software agents.
\newblock In {\em Proceedings of the Second International Conference on
  Autonomous Agents (AGENTS)}, pages 301--308, Minneapolis/St. Paul, MN, USA,
  May 1998.

\bibitem{Yokoo03:Characterization}
M.~Yokoo.
\newblock The characterization of strategy/false-name proof combinatorial
  auction protocols: Price-oriented, rationing-free protocol.
\newblock In {\em Proceedings of the Eighteenth International Joint Conference
  on Artificial Intelligence (IJCAI)}, pages 733--742, Acapulco, Mexico, 2003.

\bibitem{Yokoo06:False}
M.~Yokoo, T.~Matsutani, and A.~Iwasaki.
\newblock False-name-proof combinatorial auction protocol: {G}roves mechanism
  with submodular approximation.
\newblock In {\em Proceedings of the International Conference on Autonomous
  Agents and Multi-Agent Systems (AAMAS)}, pages 1135--1142, Hakodate, Japan,
  2006.

\bibitem{Yokoo01:Robust}
M.~Yokoo, Y.~Sakurai, and S.~Matsubara.
\newblock Robust combinatorial auction protocol against false-name bids.
\newblock {\em Artificial Intelligence}, 130(2):167--181, 2001.

\bibitem{Yokoo04:effect}
M.~Yokoo, Y.~Sakurai, and S.~Matsubara.
\newblock The effect of false-name bids in combinatorial auctions: New fraud in
  {Internet} auctions.
\newblock {\em Games and Economic Behavior}, 46(1):174--188, 2004.

\end{thebibliography}
\bibliographystyle{abbrv}

\appendix 

\section{Proofs}

Proof of Theorem~\ref{th:NSAW}:

\begin{proof} We first prove that if $\chi$ satisfies NSAW, then the mechanism
	is FNPW. Let us consider a specific agent $x$.  Let $O-Y-Z$ be the set of
	agents other than herself.  
	%Under mechanism $\chi$, 
	Let $Y$ be the set of
	false-name identities $x$ submits and keeps at the end.  Let $Z$ be the set
	of false-name identities $x$ submits but withdraws at the end.  So,
	$O$ is the set of all the identities.  The set of items $x$ receives at the
	end is $\bigcup\limits_{i\in Y}B_i$, where $B_i$ is the bundle won by identity $i$.
	The total price $x$ pays is $\sum\limits_{i\in Y}\chi(B_i,O-\{i\})$.  
	According to NSAW, this price is at least $\chi(\bigcup\limits_{i\in
	Y}B_i,O-Y-Z)$. That is, $x$ 
	would not be any worse off if she just used a single
	identity to buy $\bigcup\limits_{i\in Y}B_i$.
%	might as well report using	
%one identity.  
When $x$ uses only one identity, her optimal strategy is to
	report truthfully.  Therefore, if NSAW is satisfied, mechanism $\chi$ is
	FNPW.

Next, we prove that if mechanism $\chi$ is FNPW, then $\chi$ must satisfy NSAW.
Suppose not, that is, suppose there exists some $\chi$ that corresponds to an FNPW mechanism, and there exist
three nonintersecting
sets of agents $Y$, $Z$, and $O-Y-Z$, such that
$\sum\limits_{i\in Y}\chi(B_i,O-\{i\})<\chi(\bigcup\limits_{i\in Y}B_i,O-Y-Z)$,
where $B_i$ is the bundle agent $i$ obtains (when we apply mechanism $\chi$ to the
agents in $O$).  Let us consider a single-minded agent $x$, who values
$\bigcup\limits_{i\in Y}B_i$ at exactly $\chi(\bigcup\limits_{i\in Y}B_i,O-Y-Z)$. If
the set of other agents faced by $x$ is $O-Y-Z$, then $x$ has utility $0$ if
she reports truthfully using a single identifier.  However, if $x$ instead submits multiple false-name
identities $Y+Z$, keeps those in $Y$ and withdraws those in $Z$, then she will
obtain her desired items at a lower price and
end up with positive utility, contradicting the
assumption that $\chi$ is FNPW.
  That is, if NSAW is not satisfied, then $\chi$ is not FNPW.  \end{proof}

Proof of Proposition~\ref{prop:s-nsaw}:

\begin{proof} 
We only need to show that S-NSAW is stronger than NSAW (by Theorem~\ref{th:NSAW}, NSAW is
sufficient (and necessary) for $\chi$ to be FNPW).  Let $\chi$ satisfy S-NSAW.	Let $O$
be an arbitrary set of agents. We run mechanism $\chi$ on the agents in $O$.  We
divide $O$ into three subgroups, $Y$, $Z$, and $O-Y-Z$.  For $i \in Y$, let $B_i$ be the
bundle agent $i$ obtains.  By PIA, we have $\sum_{i\in Y}\chi(B_i, O-\{i\})\ge
\sum_{i\in Y}\chi(B_i,O-Y-Z)$. 
By DLB, we have $\sum_{i\in Y}\chi(B_i,O-Y-Z)\ge \chi(\bigcup\limits_{i\in Y}B_i,O-Y-Z)$.  Combining these inequalities, we can conclude that S-NSAW
implies NSAW.  
\end{proof}

Proof of Theorem~\ref{th:FNPW-implementable}:

\begin{proof} We first prove that if $X$ is FNPW-implementable, then $X$
	satisfies weak-monotonicity, sub-additivity, and
	withdrawal-monotonicity.  If $X$
	is FNPW-implementable, then $X$ is also FNP-implementable.  Hence, $X$
	satisfies both weak-monotonicity and sub-additivity~\cite{Todo09:Characterizing}; 
	only withdrawal-monotonicity remains to be shown.  Let $\chi$ be the (PORF)
	price function corresponding to an FNPW mechanism that allocates according
	to $X$.  We denote $X(\theta_i,\theta_{-i})$ by $S$. Since
	$v(\theta_i^L,X(\theta_i^L,\theta_{-i}))=0$, we have $v(\theta_i^L,S)\le
	\chi(S,\theta_{-i})$ (otherwise, an agent with true type $\theta_i^L$ would be better off
	purchasing $S$).  Since
	$X(\theta_i^U,\theta_{-i}\cup\theta^a)=X(\theta_i,\theta_{-i})=S$, we have
	$v(\theta_i^U, S)\ge \chi(S,\theta_{-i}\cup\theta^a)$ (because an agent with true
	type $\theta_i^U$ is best off buying $S$ when the other agents' types are
	$\theta_{-i}\cup\theta^a$). $\chi$ is FNPW, %hence it satisfies the PIA
	%condition, by Theorem~\ref{th:NSAW} and Proposition~\ref{prop:NSAplusPIA}.
	we must have $\chi(S,\theta_{-i}\cup\theta^a)\ge
	\chi(S,\theta_{-i})$.  Combining all the inequalities, we get $v(\theta_i^U,
	X(\theta_i,\theta_{-i}))\ge v(\theta_i^L,X(\theta_i,\theta_{-i}))$.
	That is, withdrawal-monotonicity is satisfied.

Next, we prove that if $X$ satisfies weak-monotonicity, 
sub-additivity, and
withdrawal-monotonicity, then $X$ is 
FNPW-implementable.  Since $X$ satisfies both
weak-monotonicity and sub-additivity, $X$ is FNP-
implementable~\cite{Todo09:Characterizing}.
 Let $\chi$ be a
(PORF) price function that characterizes an FNP mechanism that allocates
according to $X$. We prove that $\chi$ must also be FNPW.  We only need to
prove that $\chi$ satisfies PIA. That is, under $\chi$, the agents have no
incentives to create indentifiers that will later on be discarded.
%(because, according to Proposition~\ref{prop:NSAplusPIA} and
%Theorem~\ref{th:NSAW}, if an FNP mechanism satisfies PIA, then it is FNPW).  
Suppose
$\chi$ does not satisfy PIA. Then, there exists a set of agents $O$, an
agent $a$ not in $O$ (where $a$'s type is denoted by $\theta^a$), and some $S\subseteq
G$, such that $\chi(S,O)>\chi(S,O\cup\{a\})$.  Let
$\chi(S,O)-\chi(S,O\cup\{a\})=\beta>0$.  Let $\theta_{-i}$ be the reported types
of the agents in $O$.  Let $i$ be an agent that is single-minded on $S$, with
a very large valuation, so that $X(\theta_i,\theta_{-i})=S$ (we denote agent
$i$'s type by $\theta_i$).  We also construct an agent that is single-minded on $S$,
with valuation $\chi(S,O)-\frac{\beta}{3}$. We denote the type of this agent by
$\theta^L_i$.  We have $X(\theta_i^L,\theta_{-i})=\emptyset$ (she is not willing to pay $\chi(S,O)$ to purchase $S$).  Hence,
$v(\theta_i^L,X(\theta_i^L,\theta_{-i}))=0$.  We construct another agent that
is also single-minded on $S$, with valuation
$\chi(S,O\cup\{a\})+\frac{\beta}{3}$. We denote the type of this agent by
$\theta^U_i$.  We have
$X(\theta_i^U,\theta_{-i}\cup\theta^a)=S=X(\theta_i,\theta_{-i})$.  By
withdrawal-monotonicity, we must have $v(\theta_i^L,X(\theta_i,\theta_{-i}))\le
v(\theta_i^U,X(\theta_i,\theta_{-i}))$.  However,
on the other hand,
$v(\theta_i^L,X(\theta_i,\theta_{-i}))=\chi(S,O)-\frac{\beta}{3} = \chi(S,O\cup\{a\}) + \frac{2\beta}{3}
> \chi(S,O\cup\{a\})+\frac{\beta}{3} = 
v(\theta_i^U,X(\theta_i,\theta_{-i}))$.
%=\chi(S,O\cup\{a\})+\frac{\beta}{3}$.  
We have
reached a contradiction. We conclude that $\chi$ has to satisfy PIA, which implies that $\chi$ is FNPW. Hence, $X$
is FNPW-implementable.  \end{proof}

Proof of Proposition~\ref{prop:LDS}:

\begin{proof}
The general LDS mechanism is rather complicated. Instead of describing LDS in
its general form, we focus on a specific LDS mechanism for three items, which
is characterized by reserve price $1$ and two levels: $[\{(A,B,C)\}]$ and
$[\{(A,B),(C)\}, \{(A),\\%latex
  (B,C)\}]$. The mechanism works as follows. If there are
at least two agents whose valuations for $\{A,B,C\}$ are at least $3$, then we
combine $\{A,B,C\}$ into one bundle, and run the Vickrey auction.  If every
agent's valuation for $\{A,B,C\}$ is less than $3$, then we do the following.
We first introduce a dummy agent into the system.  The dummy agent has an
additive valuation function and values every item at $1$.  We only allow five
types of allocations: 1) The dummy agent wins everything.  2) The dummy agent
wins one of $\{A,B\}$ and $\{C\}$, and a non-dummy agent wins the other.  3)
The dummy agent wins one of $\{A\}$ and $\{B,C\}$, and a non-dummy agent wins
the other.  4) A non-dummy agent wins one of $\{A,B\}$ and $\{C\}$, and another
non-dummy agent wins the other.  5) A non-dummy agent wins one of $\{A\}$ and
$\{B,C\}$, and another non-dummy agent wins the other.  We run the VCG
mechanism on this restricted set of possible allocations.  Finally, if there is
only one agent whose valuation for $\{A,B,C\}$ is at least $3$, then this agent
is the only winner. She has the option to purchase all the items at price $3$,
or to obtain the result she would have obtained if everyone (including the
dummy agent) were to join in the above maximal-in-range mechanism. 

We only need to prove that the above specific LDS mechanism does not satisfy
withdrawal-monotonicity.  We consider the following scenario involving only
types that are single-minded.  $\theta_{-i}$ contains only one type from an
agent who bids $2.2$ on $\{A,B\}$.  If $\theta_i$ is bidding $1.3$ on $\{A\}$,
	  then $X(\theta_i,\theta_{-i})=\{A\}$.  If $\theta_i^U$ is bidding $1.05$
	  on $\{A\}$, and $\theta^a$ is bidding $2.9$ on $\{B,C\}$, then
	  $X(\theta_i^U,\theta_{-i}\cup\theta^a)=\{A\}$.
	  If $\theta_i^L$ is bidding $1.1$
	  on $\{A\}$, then
	  $X(\theta_i^L,\theta_{-i})=\emptyset$. That is, 
	  $v(\theta_i^L,X(\theta_i^L,\theta_{-i}))=0$.
	  According to withdrawal-monotonicity, we should have
	  $v(\theta_i^L,\{A\})=1.1\le v(\theta_i^U,\{A\})=1.05$, which is a contradiction.
  We conclude
  that, in general, LDS does not satisfy withdrawal-monotonicity, and hence is not FNPW.
  \end{proof}

Proof of Theorem~\ref{th:submodularity}:

\begin{proof} We first prove that if the type space satisfies submodularity,
then the VCG mechanism is FNPW.
We consider agent $i$.  Let $K$ be the set of false-name identities $i$ submits
and keeps at the end.  Let $W$ be the set of false-name
identities $i$ submits and withdraws.  We already
know that submodularity is sufficient for the VCG mechanism to be FNP. Hence,
if $K$ contains multiple identities, then $i$ might as well replace all of them
by one identity that reports $i$'s true type. We then show that the identities
in $W$ do not help $i$.
We use $S$ to denote the set of items won by $i$ at
the end.  To win $S$, $i$ pays the VCG price $U(G,\{-i\}\cup W)-U(G-S,\{-i\}\cup W)$
($\{-i\}$ is the set of other agents).  We use $S'$ to denote the set of items
won by identities in $W$, when we allocate items in $G-S$ to identities in
$\{-i\}\cup W$ efficiently.  We have that $U(G,\{-i\}\cup W)-U(G-S,\{-i\}\cup W)=
U(G,\{-i\}\cup W)-U(G-S-S',\{-i\})-U(S',W)\ge
U(G-S',\{-i\})+U(S',W)-U(G-S-S',\{-i\})-U(S',W)=
U(G-S',\{-i\})-U(G-S-S',\{-i\})$.  The submodularity condition implies that
%$U(G-S,\{-i\})+U(G-S',\{-i\})\ge U(G-S-S',\{-i\})+U(G,\{-i\})$.  That is,
$U(G-S',\{-i\})-U(G-S-S',\{-i\})\ge U(G,\{-i\})-U(G-S,\{-i\})$. But, the expression
on the right-hand side of the inequality is the price $i$ would be charged for
$S$ when she uses a single identifier.
That is, 
$i$ does not benefit from the false-name identities in $W$.
Therefore, the VCG mechanism is FNPW
if the type space satisfies submodularity.

Next, we prove that if the VCG mechanism is FNPW, then the type space must
satisfy submodularity (if it contains the additive valuations).  
Let $S$ be an arbitrary set of items. Let $i$ be an agent that
is interested in $S$. Since we allow additive valuations, such $i$ always exists ({\em e.g.}, $i$ may have a very large valuation for every item in $S$).
If $i$ bids truthfully, then
she can win $S$ at a price of $U(G,\{-i\})-U(G-S,\{-i\})$.  Let $S'$ be another
arbitrary set of items that does not intersect with $S$.  For each item $j$ in
$S'$, we introduce a false-name identity that is only interested in item $j$,
with value $c$,
%with type $(j,c)$,
where $c$ is set to a very large value ({\em e.g.}, larger than $U(G,\{-i\})$).
These false-name identities are allowed since we assume the type space contains
the additive valuations.  Let $W$ be the set of identities introduced.  With
$W$, $i$ can win $S$ at a price of $U(G,\{-i\}\cup W)-U(G-S,\{-i\}\cup W)$.
We have that $U(G,\{-i\}\cup W)-U(G-S,\{-i\}\cup W)
=U(G-S',\{-i\})+U(S',W)-U(G-S-S',\{-i\})-U(S',W)
=U(G-S',\{-i\})-U(G-S-S',\{-i\})$.  The new price should never be smaller than
the old price. Otherwise, there is an incentive for $i$ to submit false-name
bids and withdraw them.
%\footnote{We assume the type space is at least general enough so that an agent
%can be interested in winning $S$.} vc: this condition needs to be stated in
%the theorem...
That is, we have $U(G,\{-i\})-U(G-S,\{-i\})\le
U(G-S',\{-i\})-U(G-S-S',\{-i\})$.  Let $S_1=G-S$, $S_2=G-S'$, and $Y=\{-i\}$.
We have $U(S_1\cap S_2,Y)-U(S_1,Y)\le U(S_2,Y)-U(S_1\cup S_2,Y)$. Since
$S_1,S_2$, and $Y$ are arbitrary, we have submodularity.
\end{proof}

Proof of Theorem~\ref{th:1overm}:

\begin{proof} 
Let $\chi$ be the price function that corresponds to an FNPW mechanism with
optimal worst-case ratio.  Since the Set mechanism is FNPW, $\chi$'s worst-case
efficiency ratio is at least $\frac{1}{m}$.  We denote item $i$ by $s_i$.  We
consider the following types: 

$\theta_a$: the type of an agent that is single-minded on the grand bundle,
with value $1$.

$\theta_i$ ($i=1,2,\ldots,m$): the type of an agent that is single-minded
on $s_i$, with value $1-\epsilon$. Here, $\epsilon$ is a small positive number.

{\em Scenario 1:} There are two agents. Agent $a$ has type $\theta_a$.
Agent $1$ has type $\theta_1$.

{\em Scenario 2:} There are two agents. Both agents have type $\theta_1$.

{\em Scenario 3:} There are $m+1$ agents. Agent $a$ has type $\theta_a$.
Agent $i$ has type $\theta_i$ for $i=1,2,\ldots,m$.

We first prove that in scenario 1, agent $a$
wins. We start with the special case of $m=1$.  If
$\chi(\{s_1\},\{\theta_1\})>1-\epsilon$, then we consider scenario 2.  In scenario
2, both agents can not afford the only item.  That is, the efficiency ratio is
$0$. Hence, we must have $\chi(\{s_1\},\{\theta_1\})\le 1-\epsilon$. That is, in
scenario 1, in the case of $m=1$, agent $a$ must win.  The IIG condition implies
that this is also true for cases with $m>1$.  

Since agent $a$ is the only winner in scenario 1, we have
$\chi(\{s_1\},
\{\theta_a\})\ge 1-\epsilon$ (otherwise, agent $1$ would win in
scenario $1$).  $\epsilon$ can be made arbitrarily close to $0$; hence,
$\chi(\{s_1\},\{\theta_a\})\ge 1$.

Finally, we consider scenario 3. The price agent $1$ faces for $s_1$ is
$\chi(\{s_1\},\{\theta_a\} \cup (\bigcup\limits_{j\ne 1}\{\theta_j\}))$. 
This price is at least $\chi(\{s_1\},\{\theta_a\})=1$. That is, agent $1$ does not
win in scenario 3. By symmetry over the items,
 agent $i$ does not win for all
$i=1,2,\ldots,m$.  The efficiency ratio in this scenario is then at
most
$\frac{1}{m(1-\epsilon)}$, which goes to
$\frac{1}{m}$ as $\epsilon$ goes to $0$.  \end{proof}

Proof of Proposition~\ref{prop:mmvip}:

\begin{proof} We first prove that MMVIP is feasible. We need to show that, with appropriate tie-breaking, MMVIP will never allocate the same item
	to multiple agents. 
	Let us suppose that under MMVIP there is a scenario in
	which two agents, $i$ and $j$, both win item $s$. Let $S_i$ and $S_j$ be the
	sets of other items (items other than $s$) that $i$ and $j$ win at the end,
	respectively.  Let $v_i=v(i,S_i\cup\{s\})-v(i,S_i)$. That
	is, $v_i$ is $i$'s marginal value for $s$.  Let
	$v_j=v(j, S_j\cup\{s\})-v(j, S_j)$. That is, $v_j$ is $j$'s
	marginal value for $s$.  If $v_i>v_j$, then $j$ has to pay at least $v_i$
	to win $s$, which is too high for her; $j$ is better off not
	winning $s$.  Similarly, if $v_i<v_j$, then $i$ is better off not winning $s$.  If
	$v_i=v_j$, then $i$ and $j$ both have to pay at least their marginal value
	for $s$ to win $s$.  That is, they are either indifferent between winning $s$ or
	not, or prefer not to win. 
	The only case that does not lead to a contradiction is where they are both indifferent;
any tie-breaking rule can resolve this
	conflict.

We then show that MMVIP is FNPW. By Proposition~\ref{prop:s-nsaw}, we only need to
prove that the price function $\chi$ that characterizes MMVIP satisfies S-NSAW.
Let $O$ be an arbitrary set of agents.  $\forall S_1,S_2\subseteq G$ with
$S_1\cap S_2=\emptyset$, we have $\chi(S_1,O)+\chi(S_2,O)=\chi(S_1\cup S_2,O)$,
because MMVIP uses item pricing.  Hence, DLB is satisfied.  $\forall S\subseteq
G$, for any agent $a$ that is not in $O$, $\chi(S,O\cup\{a\})=\sum\limits_{s\in
S}\chi(s,O\cup\{a\}) =\sum\limits_{s\in S}\max\limits_{j\in
O\cup\{a\}}\max\limits_{S'\subseteq G-\{s\}}\{v(j, S'\cup\{s\})-v(j, S')\} \ge
\sum\limits_{s\in S}\max\limits_{j\in O}\max\limits_{S'\subseteq
G-\{s\}}\{v(j, S'\cup\{s\})-v(j, S')\} =\sum\limits_{s\in S}\chi(s,O)
=\chi(S,O)$.  That is, PIA is also satisfied.  \end{proof}

Proof of Proposition~\ref{prop:mmvipvcg}:

\begin{proof} When the agents' valuations are additive, we have that MMVIP's
item price function satisfies $\chi(s,O)=$\\$\max\limits_{j\in
O}\max\limits_{S\subseteq G-\{s\}}\{v(j, S\cup\{s\})-v(j,S)\} =
\max\limits_{j\in O} v(j,\{s\})$.  Thus, MMVIP is equivalent to $m$ separate
Vickrey auctions (one Vickrey auction for each item), and hence to VCG (which
also corresponds to $m$ separate Vickrey auctions when the valuations are
additive).  \end{proof}

Proof of Proposition~\ref{prop:mmvipminimal}:

\begin{proof}
%Let $\chi'$ be the price function of a FNPW mechanism $M$ that also uses
%item pricing, and $M$ is not the MMVIP mechanism.  
For the sake of contradiction, let us assume that the proposition is false. That
is, we assume that for every set of items $S$ and every set of agents $O$,
we have $\chi'(S,O) \le \chi(S,O)$. 
Since $\chi \neq \chi'$, we have that
there exists at least one set of items $S$ and one set of agents $O$ such
that $\chi'(S,O) < \chi(S,O)$.
%Let $S$ and $O$ be the set of items and the set of agents that satisfy
%$\chi'(S,O) < \chi(S,O)$. 
Since $\chi'(S,O)=\sum_{s\in S}\chi'(s,O)$ and $\chi(S,O)=\sum_{s\in
  S}\chi(s,O)$, it follows that there exists $s\in S$ such that
$\chi'(s,O)<\chi(s,O)$.  By the definition of MMVIP, $\chi(s,O)$
corresponds to the maximal marginal value of some agent $j\in O$. That is,
there exists $S'\subset G$ with $s \notin S'$ such that $\chi(s,O)=v(j,
S'\cup \{s\})-v(j,S')$. We construct an agent $x$, whose valuation function
is additive.  Let $x$'s valuations of items not in $S'\cup \{s\}$ be
extremely high, so that $x$ wins all these items under both mechanisms
$\chi$ and $\chi'$.  (We recall that we assume consumer sovereignty for
FNPW mechanisms, so that $\chi, \chi' < \infty$ everywhere.)  Let $x$'s
valuation on $s$ be $\chi(s,O)-\epsilon$ (where $\epsilon$ is small enough
so that $\chi(s,O)-\epsilon > \chi'(s,O)$). Let $x$'s valuation of items in
$S'$ be $0$. When the set of agents consists of $x$ and the agents in $O$,
we have that $x$ wins all the items except for those in $S'$ under $M$.
Since $M$ is FNPW, we have $\chi'(s,O)\ge \chi'(s,\{j\})$.  That is,
when the set of agents consists of only $x$ and $j$, $x$ also wins all the
items except for those in $S'$ under $M$.  Also, under $M$, $j$ wins all of
$S'$, because for any $s' \in S'$, we have $\chi'(s', \{x\}) \leq \chi(s',
\{x\}) = 0$.  However, we then have that $\chi'(s,\{x\})\le
\chi(s,\{x\})=\chi(s,O)-\epsilon =v(j,S'\cup \{s\})-v(j,S')-\epsilon$, so
that $j$ would choose to also win $s$ when facing $x$ under $M$.  That is,
under $M$, when the set of agents consists of only $x$ and $j$, $s$ is won
by both agents, contradicting the assumption that $M$ is feasible.  Thus,
assuming that the proposition is false leads to a contradiction.
\end{proof}

\section{Automated FNPW Mechanism Design}
\label{sec:amd}

In this section, we propose an automated mechanism design 
(AMD) technique that
transforms any feasible mechanism into an FNPW mechanism.  In our setting, a
feasible mechanism is characterized by a price function $\chi$.  We start with
any $\chi$ that corresponds to a feasible mechanism ({\em e.g.}, the price
function of the VCG mechanism).  Our technique modifies $\chi$ so that it
satisfies S-NSAW, while maintaining feasibility.

We recall that for general combinatorial auction settings, there are three
known FNPW mechanisms (Set, MB, and MMVIP), and four known FNP mechanisms (the
aforementioned three mechanisms, plus LDS).  Though computationally expensive
(like many other AMD techniques in other contexts), this technique has the
potential to enlarge the set of known FNPW (FNP) mechanisms. By designing tiny
instances of FNPW mechanisms via automated mechanism design, we may get a
better understanding of the structure of FNPW mechanisms, from which we can
then conjecture FNPW mechanisms in analytical form. Later in this section, we
show that in a specific setting, by starting with the VCG mechanism, the AMD
technique produces exactly the MMVIP mechanism.  That is, had we not known the
MMVIP mechanism, the AMD technique could have helped us find it (though it just
so happened that we discovered MMVIP before the AMD technique).  It remains an
open question of whether new, general FNPW mechanisms can be found in this way.

Let $H:\Theta^k\to [0,\infty)$ be a function that maps any set of agents
  $O$ (more precisely, their reported types) to a nonnegative number
  $H(O)$.  For any feasible mechanism $\chi$, we define $\chi^H$ as
  follows:

\begin{itemize}
	\item For any set of agents $O$, $\forall \emptyset\subsetneq S\subseteq G$, $\chi^H(S,O)=\chi(S,O)
		+H(O)$. 
	\item For any set of agents $O$, $\chi^H(\emptyset,O)=\chi(\emptyset,O)=0$.
\end{itemize} 

That is, moving from $\chi$ to $\chi^H$, if we fix the reported types of
the other agents $O$, then we are essentially increasing the price of every
nonempty set of items by the same amount, while keeping the price of
$\emptyset$ at $0$.

\begin{lemma}\label{lm:gm-sma}\cite{Yokoo06:False} $\forall$ feasible $\chi$, $\forall H$, $\chi^H$ is feasible.  \end{lemma}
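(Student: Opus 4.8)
The plan is to exploit the equivalence recalled in the excerpt, namely that in this setting every feasible mechanism is exactly an (anonymous) PORF mechanism. Thus proving $\chi^H$ feasible reduces to checking that $\chi^H$ is a valid PORF price function and that the induced utility-maximizing (price-taking) allocation never assigns an item to two agents. First I would dispatch the easy axioms. By construction $\chi^H(\emptyset,O)=0$; nonnegativity holds because $\chi\ge 0$ and $H\ge 0$; and monotonicity is inherited, since for $S_1\subseteq S_2$ either both are nonempty (and $\chi^H(S_1,O)-\chi^H(S_2,O)=\chi(S_1,O)-\chi(S_2,O)\le 0$) or $S_1=\emptyset$ (and $\chi^H(\emptyset,O)=0\le\chi(S_2,O)+H(O)=\chi^H(S_2,O)$). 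Finiteness of $\chi$, which follows from consumer sovereignty for $\chi$, together with $H<\infty$ gives $\chi^H<\infty$, so consumer sovereignty is preserved as well.

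The heart of the argument is a structural observation about how the map $\chi\mapsto\chi^H$ changes each agent's optimal bundle. Fix the set of agents $O$ and an agent $i$ facing the others $O-\{i\}$. Passing from $\chi$ to $\chi^H$ adds the same constant $H(O-\{i\})$ to the price of every nonempty bundle while leaving the price of $\emptyset$ equal to $0$. Hence the relative ordering of the nonempty bundles by net utility $v(\theta_i,S)-\chi^H(S,O-\{i\})$ is identical to the ordering under $\chi$; the only comparison affected is that of the best nonempty bundle against winning nothing. Writing $u_i^\ast=\max_{\emptyset\neq S}\{v(\theta_i,S)-\chi(S,O-\{i\})\}$, it follows that under $\chi^H$ agent $i$'s utility-maximizing choice is her $\chi$-optimal bundle $B_i$ whenever $u_i^\ast\ge H(O-\{i\})$, and is $\emptyset$ otherwise. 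In particular, under $\chi^H$ each agent either keeps exactly the bundle she would have received under $\chi$ or drops out to the empty bundle; she is never drawn to a different nonempty bundle.

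With this in hand, feasibility of the allocation is immediate. Let $\{B_i\}_{i\in O}$ be a conflict-free allocation witnessing feasibility of $\chi$ (such a choice exists, possibly after tie-breaking, precisely because $\chi$ is feasible). Assign to each $i$ under $\chi^H$ the bundle $B_i'\in\{B_i,\emptyset\}$ dictated by the previous paragraph. Each $B_i'$ is utility-maximizing for $i$ at the prices $\chi^H(\cdot,O-\{i\})$, so this is a legitimate PORF allocation; and since $B_i'\subseteq B_i$ with the $B_i$ pairwise disjoint, the $B_i'$ remain pairwise disjoint, so no item is handed to two agents. The remaining feasibility requirements---strategy-proofness, the pay-only property, individual rationality, and determinism/symmetry (the latter because $H(O)$ depends only on the multiset of reported types and not on $S$)---hold automatically for any PORF mechanism with a valid price function. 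Hence $\chi^H$ is feasible.

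The only delicate point, and the step I would be most careful about, is the claim that no agent can be tempted into a nonempty bundle other than $B_i$ under $\chi^H$. This relies crucially on the shift $H(O-\{i\})$ being a single constant independent of $S$, so that it cannot reorder the nonempty bundles, and on keeping the empty bundle free; it is also exactly what makes the disjointness argument work, since the passage $\chi\mapsto\chi^H$ can only shrink winning bundles to $\emptyset$, never enlarge or swap them.
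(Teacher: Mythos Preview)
Your argument is correct and follows exactly the approach the paper sketches: because adding the constant $H(\theta_{-i})$ to every nonempty bundle's price leaves the relative ranking of nonempty bundles unchanged, each agent's optimal choice under $\chi^H$ is either her $\chi$-optimal bundle or $\emptyset$, and so a conflict-free allocation for $\chi$ immediately yields one for $\chi^H$. The paper states this more tersely (and attributes it to \cite{Yokoo06:False}), while you additionally spell out the auxiliary PORF axioms (nonnegativity, monotonicity, consumer sovereignty, symmetry); these checks are routine but appropriate, and your handling of tie-breaking---inheriting the choice $B_i$ from $\chi$ whenever it remains optimal---is exactly the right way to make the disjointness argument precise.
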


This lemma was first proved in~\cite{Yokoo06:False}.\footnote{The GM-SMA
mechanism~\cite{Yokoo06:False} relies on this property. However, it has been
shown that GM-SMA is {\em not} FNP in \cite{Todo09:Characterizing}.} An agent
is allocated her favorite set of items (the set that maximizes valuation minus
payment) in (PORF) mechanism $\chi$.  From the perspective of agent $i$, the
set of types reported by the other agents $\theta_{-i}$ is fixed. That is, for
$i$, under $\chi^H$, the price of every nonempty set of items is increased by
the same amount $H(\theta_{-i})$.  Hence, agent $i$'s favorite set of items is
either unchanged, or has become $\emptyset$ (if $H(\theta_{-i})$ is too large).
It is thus easy to see that if $\chi$ never allocates the same item to more
than one agent, then neither does $\chi^H$.  That is, feasibility is not
affected.\footnote{ If the agents are single-minded, then in a PORF mechanism,
as long as the prices of larger sets of items are more expensive, an agent's
favorite set of items is either the set on which she is single-minded, or the
empty set.  Thus, we do not need to increase the price of every set by the same
amount. As long as we are increasing the prices, an agent's favorite set either
remains unchanged, or becomes empty (if the price increase on the set on which
she is single-minded is too high).  That is, for single-minded agents, we have
more flexibility in the process of transforming a feasible mechanism into an
FNPW mechanism.  Due to space constraint, we do not pursue this further here.}

\begin{theorem} $\forall$ feasible $\chi$, we define the following $H$.  For
	any set of agents $O$, $H(O)$ equals the maximum of the following two values:
	\begin{itemize}
		\item
	$\max\limits_{S_1, S_2\subseteq G, S_1\cap S_2=\emptyset}
		\{\chi(S_1\cup S_2,O)-\chi(S_1,O)-\chi(S_2,O)\}$
	\item
	$\max\limits_{\emptyset\subsetneq S\subseteq G, j\in O}\{\chi(S,O-\{j\})+H(O-\{j\})-\chi(S,O)\}$
\end{itemize}

We have that $\chi^H$ is FNPW.
\label{th:amd}
\end{theorem}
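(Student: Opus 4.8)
The plan is to show that $\chi^H$ satisfies the S-NSAW condition, from which FNPW follows immediately by Proposition~\ref{prop:s-nsaw}. Recall that S-NSAW consists of two parts, DLB and PIA, and observe that the two terms in the definition of $H(O)$ appear to be tailored precisely to enforce these two conditions, respectively. First I would note that $H(O)\ge 0$ for every $O$: the first term in the maximum is at least $0$, as seen by taking $S_1=S_2=\emptyset$ (so that $\chi(\emptyset,O)-\chi(\emptyset,O)-\chi(\emptyset,O)=0$). Since $H\ge 0$, Lemma~\ref{lm:gm-sma} guarantees that $\chi^H$ is feasible, so it only remains to verify DLB and PIA. (The recursion defining $H$ is well-founded, since $H(O)$ refers only to $H(O')$ with $|O'|<|O|$.)

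For DLB, fix a set of agents $O$ and disjoint $S_1,S_2\subseteq G$. If either set is empty the inequality is trivial, since then $\chi^H$ of the empty set is $0$ and $S_1\cup S_2$ equals the nonempty one. When both sets are nonempty, unfolding the definition gives $\chi^H(S_1,O)+\chi^H(S_2,O)=\chi(S_1,O)+\chi(S_2,O)+2H(O)$ while $\chi^H(S_1\cup S_2,O)=\chi(S_1\cup S_2,O)+H(O)$, so DLB reduces to $H(O)\ge \chi(S_1\cup S_2,O)-\chi(S_1,O)-\chi(S_2,O)$, which holds because the right-hand side is one of the terms over which the first maximum in the definition of $H(O)$ is taken.

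For PIA, fix $O$, an agent $a\notin O$, and a set $S\subseteq G$. For $S=\emptyset$ both prices are $0$, so suppose $S$ is nonempty. Here I would invoke the second term in the definition of $H(O\cup\{a\})$, choosing the index $j=a$ (so that $(O\cup\{a\})-\{j\}=O$); this yields $H(O\cup\{a\})\ge \chi(S,O)+H(O)-\chi(S,O\cup\{a\})$. Rearranging gives exactly $\chi(S,O\cup\{a\})+H(O\cup\{a\})\ge \chi(S,O)+H(O)$, i.e. $\chi^H(S,O\cup\{a\})\ge\chi^H(S,O)$, which is PIA.

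Having verified both DLB and PIA, I conclude that $\chi^H$ satisfies S-NSAW, and hence by Proposition~\ref{prop:s-nsaw} that $\chi^H$ is FNPW. The only conceptual obstacle is recognizing why the second term in $H$ is self-referential, containing $H(O-\{j\})$ rather than merely a $\chi$-expression: this is precisely what forces PIA to hold for the \emph{modified} function $\chi^H$ (which carries the added $H$ offsets on both sides) rather than only for the original $\chi$. Once this design is understood, every remaining step is routine algebra.
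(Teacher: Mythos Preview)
Your proof is correct and follows essentially the same route as the paper: verify S-NSAW for $\chi^H$ by checking DLB via the first term of the maximum and PIA via the second term (with $j=a$), then invoke Proposition~\ref{prop:s-nsaw}. Your added remarks on nonnegativity of $H$, well-foundedness of the recursion, and the role of the self-reference $H(O-\{j\})$ are accurate and make the argument slightly more explicit than the paper's version.
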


It should be noted that, for any $O$, the first expression in the theorem
is at least $0$ (setting $S_1=S_2=\emptyset$).  That is, $H$ never takes
negative values.  $\chi^H$ is feasible by Lemma~\ref{lm:gm-sma}.

%Proof of Theorem~\ref{th:amd}:

\begin{proof}
	We prove that $\chi^H$ satisfies S-NSAW. By Proposition~\ref{prop:s-nsaw},
        this suffices to show that $\chi^H$ is FNPW.

	{\em Proof of DLB:} Let $O$ be an arbitrary set of agents.
        $\forall S_1,S_2\subseteq G$ with $S_1\cap S_2=\emptyset$, we prove
        that $\chi^H(S_1,O)+\chi^H(S_2,O)\ge \chi^H(S_1\cup S_2,O)$.  If at
        least one of $S_1$ and $S_2$ is empty, then w.l.o.g., we assume
        $S_1=\emptyset$.  In this case,
        $\chi^H(S_1,O)+\chi^H(S_2,O)=\chi^H(S_2,O)=\chi^H(S_1\cup S_2,O)$.
        If neither $S_1$ nor $S_2$ is empty, then we have
        $\chi^H(S_1,O)+\chi^H(S_2,O)-\chi^H(S_1\cup
        S_2,O)=H(O)+\chi(S_1,O)+\chi(S_2,O)-\chi(S_1\cup S_2,O) \ge
        H(O)-\max\limits_{S'_1\cap S'_2=\emptyset} \{\chi(S'_1\cup
        S'_2,O)-\chi(S'_1,O)-\chi(S'_2,O)\}\ge 0$.
		
	{\em Proof of PIA:} Let $O$ be an arbitrary set of agents. Let $a$
        be an agent that is not in $O$.  If $S$ is empty, then we have
        $\chi^H(S,O\cup\{a\})=\chi^H(S,O)=0$.
		$\forall \emptyset\subsetneq S\subseteq G$,
        $\chi^H(S,O\cup\{a\})=H(O\cup\{a\})+\chi(S,O\cup\{a\}) \ge
        (\chi(S,O)+H(S,O)-\chi(S,O\cup\{a\}))+\chi(S,O\cup\{a\})=\chi^H(S,O)$.
\end{proof}

This still leaves the question of how to compute the $H$ described in the
theorem; we address this next. Given $\chi$, for any agent $i$ and any set of
other types $\theta_{-i}$, we compute $H(\theta_{-i})$ using the following
dynamic programming
algorithm. 

\begin{center}
\framebox{\parbox{4in}{
For $t=0,1,\ldots,|\theta_{-i}|$

\hspace{0.1in} For any $T\subseteq \theta_{-i}$ with $|T|=t$

\hspace{0.2in} $h_1=\max\limits_{S_1,S_2\subseteq G, S_1\cap
S_2=\emptyset}\{\chi(S_1\cup S_2,T)-\chi(S_1,T)-\chi(S_2,T)\}$.  

\hspace{0.2in} $h_2=\max\limits_{\emptyset\subsetneq S\subseteq G, j\in
T}\{H(T-\{j\})+\chi(S,T-\{j\})-\chi(S,T)\}$.  

\hspace{0.2in} $H(T)=\max\{h_1,h_2\}$.
}}
\end{center}

\begin{proposition}
If we apply the AMD technique to a mechanism that already satisfies
S-NSAW, the mechanism remains unchanged.
\label{prop:same}
\end{proposition}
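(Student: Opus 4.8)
The plan is to show that applying the AMD technique to an S-NSAW mechanism $\chi$ produces $H(O)=0$ for every set of agents $O$. Since $\chi^H(S,O)=\chi(S,O)+H(O)$ for nonempty $S$ and $\chi^H(\emptyset,O)=0=\chi(\emptyset,O)$, the identity $H\equiv 0$ immediately gives $\chi^H=\chi$, i.e. the mechanism is left unchanged. Thus the entire task reduces to proving $H\equiv 0$.

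First I would record that $H(O)\ge 0$ holds unconditionally, regardless of any assumption on $\chi$. The first expression defining $H(O)$ evaluates to $\chi(\emptyset,O)-\chi(\emptyset,O)-\chi(\emptyset,O)=0$ when $S_1=S_2=\emptyset$, so its maximum is at least $0$, and hence $H(O)=\max\{h_1,h_2\}\ge 0$. It therefore remains only to prove the reverse inequality $H(O)\le 0$ under the S-NSAW hypothesis, after which $H(O)=0$ follows.

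I would prove $H(O)\le 0$ by induction on $|O|$, handling the two expressions $h_1$ and $h_2$ separately. For $h_1$, the DLB condition gives $\chi(S_1,O)+\chi(S_2,O)\ge\chi(S_1\cup S_2,O)$ for all disjoint $S_1,S_2$, so every term in the max defining $h_1$ is nonpositive and $h_1\le 0$; this holds for all $O$, including the base case $O=\emptyset$, where the $h_2$-max is taken over an empty index set (no $j\in O$) and hence equals $0$ by the stated convention. In the inductive step, the inductive hypothesis gives $H(O-\{j\})=0$ for each $j\in O$, so $h_2=\max_{\emptyset\subsetneq S,\,j\in O}\{\chi(S,O-\{j\})-\chi(S,O)\}$; applying PIA with the agent set $O-\{j\}$ and the added agent $j$ (note $O=(O-\{j\})\cup\{j\}$) yields $\chi(S,O)\ge\chi(S,O-\{j\})$, so every term is nonpositive and $h_2\le 0$. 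Combining, $H(O)=\max\{h_1,h_2\}\le 0$, which together with $H(O)\ge 0$ gives $H(O)=0$ and closes the induction.

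The argument is essentially a bookkeeping induction, so there is no deep obstacle. The only points requiring care are (i) correctly invoking the empty-set max convention so that the base case and the extremal choices $S_1=S_2=\emptyset$ behave as intended, and (ii) aligning the quantifiers in PIA---specifically recognizing that PIA applied to the agent set $O-\{j\}$ with the newly added agent $j$ is exactly what bounds the $h_2$ terms once the inductive hypothesis has eliminated the $H(O-\{j\})$ contribution.
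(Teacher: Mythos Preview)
Your proof is correct. The induction on $|O|$ showing $H\equiv 0$ is the natural argument: DLB forces $h_1\le 0$ at every stage, the empty-max convention handles the base case for $h_2$, and then PIA together with the inductive hypothesis $H(O-\{j\})=0$ forces $h_2\le 0$ in the inductive step. Combined with the unconditional lower bound $H(O)\ge 0$ from the choice $S_1=S_2=\emptyset$, you get $H\equiv 0$ and hence $\chi^H=\chi$.

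As for comparison with the paper: the paper states Proposition~\ref{prop:same} without proof, so there is nothing to compare against. Your argument supplies exactly the kind of routine verification the authors presumably had in mind and omitted; in particular, the two cases (i) and (ii) you flag---the empty-max convention and the correct instantiation of PIA with base set $O-\{j\}$ and added agent $j$---are the only places where any care is needed, and you handle both correctly.
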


We use the phrase ``the AMD mechanism'' to denote the mechanism generated
by the AMD technique starting from VCG (though the AMD technique is not
restricted to starting from VCG).  
Next, we prove a proposition that is similar
to Proposition~\ref{prop:mmvipvcg}.

\begin{proposition} When we restrict the preference domain to additive valuations,
  the MMVIP, the VCG, and the AMD mechanism all coincide.  
\label{prop:allcoincide}
\end{proposition}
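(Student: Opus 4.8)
The plan is to reduce the statement to two facts that are already available: that MMVIP and VCG agree on the additive domain (Proposition~\ref{prop:mmvipvcg}), and that the AMD transformation leaves unchanged any mechanism that already satisfies S-NSAW (Proposition~\ref{prop:same}). Since Proposition~\ref{prop:mmvipvcg} gives MMVIP $=$ VCG when we restrict to additive valuations, it suffices to show that the AMD mechanism also equals VCG on this domain; the three then coincide.

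First I would establish that VCG, restricted to additive valuations, satisfies S-NSAW. The cleanest route is to bootstrap from MMVIP: the proof of Proposition~\ref{prop:mmvip} shows that the price function characterizing MMVIP satisfies both DLB and PIA, and Proposition~\ref{prop:mmvipvcg} shows that this price function coincides with the VCG price function on the additive domain. Hence VCG satisfies S-NSAW there. If one prefers a direct verification, one can use the explicit item price $\chi(s,O)=\max_{j\in O} v(j,\{s\})$ from the proof of Proposition~\ref{prop:mmvipvcg}: DLB holds with equality because the pricing is additive over items, and PIA holds because enlarging $O$ can only increase the maximum defining each $\chi(s,O)$.

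Next, since the AMD mechanism is by definition the output of the AMD technique applied to VCG, and VCG satisfies S-NSAW on the additive domain, Proposition~\ref{prop:same} immediately yields that the AMD mechanism equals VCG on this domain. Combining this with MMVIP $=$ VCG from Proposition~\ref{prop:mmvipvcg}, all three mechanisms coincide.

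I expect the only subtlety---rather than a genuine obstacle---to be confirming that the $H$ function computed by the AMD technique is identically zero on the additive domain, so that $\chi^H=\chi$. This is exactly the content of Proposition~\ref{prop:same}, but it can also be checked directly by the induction underlying that proposition: for each $T\subseteq\theta_{-i}$, the quantity $h_1$ equals $0$ because additive item pricing makes DLB tight, while $h_2\le 0$ because PIA gives $\chi(S,T-\{j\})\le\chi(S,T)$ and the inductive hypothesis gives $H(T-\{j\})=0$; hence $H(T)=0$ for all $T$, and the transformation leaves the price function untouched.
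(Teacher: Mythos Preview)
Your proposal is correct and follows essentially the same route as the paper: invoke Proposition~\ref{prop:mmvipvcg} for MMVIP $=$ VCG, verify that VCG on the additive domain satisfies S-NSAW (the paper does this directly via the item price $\chi(S,O)=\sum_{s\in S}\max_{j\in O}v(j,\{s\})$, which matches your alternative route), and then apply Proposition~\ref{prop:same} to conclude AMD $=$ VCG. Your extra paragraph confirming $H\equiv 0$ is more detail than the paper gives, but it is consistent with the same argument.
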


%Proof of Proposition~\ref{prop:allcoincide}:

\begin{proof} 
 Proposition~\ref{prop:mmvipvcg} already shows that MMVIP and VCG coincide.  All
 that remains to show is that VCG already satisfies S-NSAW, so that by
 Proposition~\ref{prop:same}, AMD is also the same.  When the agents' valuations
 are additive, the VCG mechanism's price function $\chi$ is defined as
 follows: for any set of items $S\subset G$ and any set of additive agents
 $O$, $\chi(S,O)=\sum_{s\in S}x^s$, where $x^s$ is the highest valuation
 for item $s$ among the agents in $O$. It is easy to see that $\chi$
 satisfies S-NSAW.
%That is, we do not need to move any price up in the AMD process.
% Therefore, the AMD mechanism coincides with the other two.
\end{proof}

Moreover, the next proposition shows that in settings with exactly two
substitutable items, the AMD mechanism coincides with MMVIP (but not with
VCG).

\begin{proposition} In settings with exactly two substitutable items, the AMD mechanism
coincides with MMVIP.  
\label{prop:subcoincide}
\end{proposition}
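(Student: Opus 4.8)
The plan is to show that the additive offset $H$ produced by the AMD technique (started from VCG) is exactly the amount needed to turn the VCG price function into the MMVIP price function. Fix $G=\{A,B\}$ with the two items being substitutes, i.e.\ every agent's valuation satisfies $v(j,\{A,B\})\le v(j,\{A\})+v(j,\{B\})$. Let $O$ be an arbitrary set of other agents, write $\alpha(O)=\max_{j\in O}v(j,\{A\})$, $\beta(O)=\max_{j\in O}v(j,\{B\})$, and let $U(\cdot,O)$ denote efficient welfare. Let $\chi$ be the VCG price function and $\mu$ the MMVIP price function. In PORF form $\chi(S,O)=U(G,O)-U(G\setminus S,O)$, so that $\chi(\{A\},O)=U(G,O)-\beta(O)$, $\chi(\{B\},O)=U(G,O)-\alpha(O)$, and $\chi(\{A,B\},O)=U(G,O)$. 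First I would use substitutability to simplify the MMVIP item prices: for each $j$ we have $v(j,\{A,B\})-v(j,\{B\})\le v(j,\{A\})$, so the inner maximum in the MMVIP definition collapses, giving $\mu(\{A\},O)=\alpha(O)$, $\mu(\{B\},O)=\beta(O)$, and hence $\mu(\{A,B\},O)=\alpha(O)+\beta(O)$ by item pricing.

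The central device is the offset $H^*(O):=\alpha(O)+\beta(O)-U(G,O)$. A one-line substitution of the expressions above verifies the identity $\chi(S,O)+H^*(O)=\mu(S,O)$ for every nonempty $S\subseteq G$, and the same subadditivity that bounds $U(G,O)\le \alpha(O)+\beta(O)$ shows $H^*(O)\ge 0$. Consequently it suffices to prove that the offset $H$ computed by the recursion in Theorem~\ref{th:amd} coincides with $H^*$: the proposition then follows at once, since $\chi^H=\chi+H=\chi+H^*=\mu$ on nonempty bundles (both vanish on $\emptyset$).

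I would establish $H(O)=H^*(O)$ by induction on $|O|$, treating the two terms of the recursion (call them $h_1$, the DLB-gap, and $h_2$, the PIA-gap) separately. For $h_1$, a direct computation shows that among all disjoint splits the only possibly nonzero one is $(S_1,S_2)=(\{A\},\{B\})$, which evaluates to $U(\{A\},O)+U(\{B\},O)-U(G,O)=H^*(O)$; since every other split gives $0$ and $H^*\ge 0$, we get $h_1(O)=H^*(O)$ exactly. For $h_2$ I would invoke the inductive hypothesis $H(O-\{j\})=H^*(O-\{j\})$ together with the identity above to rewrite each summand $\chi(S,O-\{j\})+H(O-\{j\})-\chi(S,O)$ as $H^*(O)+\bigl(\mu(S,O-\{j\})-\mu(S,O)\bigr)$. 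By PIA for MMVIP, established in the proof of Proposition~\ref{prop:mmvip}, adding agent $j$ only raises MMVIP prices, so each bracketed term is $\le 0$ and hence $h_2(O)\le H^*(O)$. Combining the two bounds yields $H(O)=\max\{h_1(O),h_2(O)\}=H^*(O)$, with the base case $O=\emptyset$ immediate since $H^*(\emptyset)=0$ and both terms vanish.

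The main obstacle is precisely the recursive nature of the AMD offset: unlike $h_1$, the $h_2$ term references $H$ on strictly smaller agent sets and so cannot be evaluated in closed form on its own. The decisive observation that unlocks the induction is that, once the identity $\chi+H^*=\mu$ is in hand, the $h_2$ comparison becomes \emph{exactly} the PIA inequality for MMVIP, which we already know holds; this is what guarantees that $h_2$ never exceeds $h_1=H^*$ and keeps the offset from overshooting, pinning $H$ to $H^*$ and thus $\chi^H$ to $\mu$.
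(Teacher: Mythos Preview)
Your proof is correct and follows essentially the same inductive strategy as the paper: compute the DLB term $h_1$ explicitly and show it equals the exact offset needed to reach MMVIP, then use the inductive hypothesis to bound the PIA term $h_2$ by the same quantity. Your explicit introduction of $H^*(O)=\alpha(O)+\beta(O)-U(G,O)$ and the identity $\chi+H^*=\mu$ makes the argument cleaner---in particular, it reduces the $h_2$ bound directly to MMVIP's PIA property---but the underlying logic matches the paper's.
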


%Proof of Proposition~\ref{prop:subcoincide}:

\begin{proof} 
The proof is by induction on the number of agents.  When there is only one
agent, this agent faces price $0$ for every bundle under the VCG mechanism.
This already satisfies S-NSAW, so by Proposition~\ref{prop:same}, we do not need to
increase any price in the AMD process. Therefore, when $n=1$, the AMD
mechanism allocates all the items to the only agent for free.  The MMVIP
mechanism does the same.  Hence, when $n=1$, the AMD mechanism coincides
with MMVIP. For the induction step, we assume that the two mechanisms
coincide when $n \leq k$. When $n=k+1$, the price function of the VCG
mechanism is defined as: $\chi(\{A\},O)=v_{AB}^*-v_B^*$,
$\chi(\{B\},O)=v_{AB}^*-v_A^*$, and $\chi(\{AB\},O)=v_{AB}^*$.  Here, $A$
and $B$ are the two items.  $v_A^*$ is the highest valuation for $A$ by the
agents in $O$.  $v_B^*$ is the highest valuation for $B$ by the agents in
$O$.  $v_{AB}^*$ is the highest combined valuation for $\{A,B\}$ by the
agents in $O$ (which may be obtained by splitting the items across two
different agents, or giving both to the same agent).  Since the items are
substitutable, $v_{AB}^*\le v_A^*+v_B^*$.  Equivalently,
$\chi(\{A\},O)+\chi(\{B\},O)\le \chi(\{AB\},O)$.  Therefore, in the AMD
technique, the price of every bundle has to increase by at least
$\chi(\{A,B\},0)-\chi(\{A\},O)-\chi(\{B\},O)$.  That is, under the AMD
mechanism, the price of $A$ is at least $v_A^*$, the price of $B$ is at
least $v_B^*$, and the price of $\{A,B\}$ is at least $v_A^*+v_B^*$.  These
prices are high enough to guarantee the PIA condition, because by the
induction assumption, the AMD mechanism coincides with MMVIP for $n \leq
k$; so, it follows that the AMD technique results in exactly these prices.
They coincide with the prices under the MMVIP mechanism.  Therefore, by
induction, the AMD mechanism coincides with the MMVIP mechanism for any
number of agents, when there are exactly two substitutable
items.  \end{proof}

It remains an open question whether there are more general settings in
which the AMD mechanism and the MMVIP mechanism coincide.

Finally, we compare the revenue and allocative efficiency of the VCG mechanism,
the Set mechanism\footnote{The MB mechanism and the Set mechanism coincide in
our experimental setup (the whole bundle is every agent's minimal bundle).}, the MMVIP mechanism, and the AMD mechanism.  It should be noted that
the VCG mechanism is not FNPW in general. We use it as a benchmark.

We consider a combinatorial auction with two items $\{A,B\}$ and five
agents $\{1,2,\\%latex
	   \ldots,5\}$.\footnote{We only focused on these tiny
  auctions because the AMD technique is computationally quite expensive.
  Nevertheless, even the solutions to tiny auctions can be helpful in
  conjecturing more general mechanisms.}
%Even though it is possible to
%design actual mechanism based on the AMD technique. We emphasize that the AMD
%technique's contribution is mostly theoretical, as it potentially enables us to
%find more FNP/FNPW mechanisms in the future.} 
We denote agent $i$'s valuation
for set $S\subseteq \{A,B\}$ by $v_i^S$.  We consider two scenarios, one with
valuations displaying substitutability, and the other with valuations
displaying complementarity.  We randomly generate $1000$ instances for each
scenario.
%gmy end

{\em Valuations with substitutability:} The $v_i^{\{A\}}$ and the
$v_i^{\{B\}}$ are drawn independently from $U(0,1)$ (the uniform distribution
from $0$ to $1$).  For all $i$, $v_i^{\{A,B\}}$ is drawn independently from
$U(\max\{v_i^{\{A\}},v_i^{\{B\}}\},
v_i^{\{A\}}+v_i^{\{B\}})$.  In this scenario, AMD and MMVIP coincide.
%\footnote{AMD and MMVIP do not always have the same
%  performance.  If we modify the scenario just a little bit ({\em e.g.},
%  add a third item that is not wanted by any agent), then they perform
%  differently.}
  They perform better than the Set mechanism, both in terms of revenue and allocative efficiency.

\begin{center}
\begin{tabular}{|c|c|c|c|c|}
	\hline
	& VCG & Set & AMD & MMVIP \\
	\hline
	Revenue & $1.285$ & $1.002$ & $1.221$ & $1.221$ \\
	\hline
	Efficiency & $1.668$ & $1.236$ & $1.550$ & $1.550$ \\
	\hline
\end{tabular}
\end{center}

{\em Valuations with complementarity:} The $v_i^{\{A\}}$ and the $v_i^{\{B\}}$
are still drawn independently from $U(0,1)$.  For all $i$, $v_i^{\{A,B\}}$ is
set to be $(v_i^{\{A\}}+v_i^{\{B\}})(1+x_i)$, where the $x_i$ are also drawn
independently from $U(0,1)$.  It turns out that, in this scenario, Set performs
better than AMD and MMVIP, both in terms of revenue and allocative
efficiency.  (MMVIP performs especially poorly when valuations exhibit
complementarity, because every item can potentially have a very large marginal
value to another agent, leading to prices that are too high.) 

\begin{center}
	\begin{tabular}{|c|c|c|c|c|} \hline & VCG & Set & AMD & MMVIP \\ \hline
		Revenue & $1.864$ & $1.849$ & $1.288$ & $0.594$ \\ \hline Efficiency &
		$2.372$ & $2.365$ & $1.565$ & $0.721$ \\ \hline \end{tabular}
	\end{center} 
	
Thus, when there are two items and five agents, among these FNPW mechanisms, it seems that Set is
most desirable if it likely that there is significant complementarity, and AMD
is most desirable if it is likely that there is substitutability. (We cannot
use the VCG mechanism unless we are certain that the type space makes VCG
FNPW.)

\section{Characterizing FNP(W) in Social Choice Settings} 
\label{sec:social}

Throughout the paper, we have only discussed combinatorial auctions. In
this section, we focus on FNP(W)\footnote{In these settings, it does not
  matter whether withdrawal is allowed or not.} in social choice settings
(without payments).  Specifically, we present a characterization of FNP(W)
social choice functions (without payments). A social choice function $f$ is
defined as $f: \{\emptyset\} \cup \Theta \cup \Theta^2 \cup \ldots
\rightarrow \Omega$, where $\Theta$ is the space of all possible types of
an agent, and $\{\emptyset\} \cup \Theta \cup \Theta^2 \cup \ldots$ is the
space of all possible profiles (since we do not know how many agents there
are). $\Omega$ is the outcome space. Let agent $i$'s type be
$\theta_i$. Let the types of agents other than $i$ be $\theta_{-i}$. $i$'s
valuation for outcome $\omega \in \Omega$
%=f(\theta_i,\theta_{-i})$
is denoted by $v_i(\theta_i,\omega)$.\\

First, we present the following straightforward characterization of
strategy-proof social choice functions.

\begin{proposition}
A social choice function $f$ is strategy-proof if and only if it satisfies
the following condition: $\forall i,\theta_i,\theta_{-i}$, we have
$f(\theta_i,\theta_{-i})\in
\arg\max_{\theta_i'}v_i(\theta_i,f(\theta_i',\theta_{-i}))$.
\end{proposition}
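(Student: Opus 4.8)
The plan is to unwind the definition of strategy-proofness and observe that the stated condition is nothing more than a direct restatement of it. In a social choice setting without payments, agent $i$'s utility when she reports $\theta_i'$ while holding true type $\theta_i$ (and the others report $\theta_{-i}$) is simply $v_i(\theta_i, f(\theta_i',\theta_{-i}))$. Strategy-proofness asserts that truthful reporting is a dominant strategy, i.e.,
\[
v_i(\theta_i, f(\theta_i,\theta_{-i})) \ge v_i(\theta_i, f(\theta_i',\theta_{-i}))
\]
for every agent $i$, every true type $\theta_i$, every misreport $\theta_i'$, and every profile $\theta_{-i}$ of the other agents.

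For the forward direction, I would fix $i$, $\theta_i$, and $\theta_{-i}$ and note that the displayed inequality, holding simultaneously over all $\theta_i'$, says exactly that the value $v_i(\theta_i, f(\theta_i,\theta_{-i}))$ obtained by truthful reporting is the largest among all values $v_i(\theta_i, f(\theta_i',\theta_{-i}))$ attainable by any report. Hence truthful reporting lies in $\arg\max_{\theta_i'} v_i(\theta_i, f(\theta_i',\theta_{-i}))$, which is precisely the claimed condition. For the converse, I would read the same chain backwards: if the truthful outcome attains the maximum of $v_i(\theta_i, f(\theta_i',\theta_{-i}))$ over $\theta_i'$, then in particular it is at least as good as the outcome of any single deviation $\theta_i'$, which is exactly the defining inequality of strategy-proofness. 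Since both implications are obtained by the same equivalence of quantified statements, no additional hypotheses are invoked.

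The only point requiring care---and the nearest thing to an obstacle in an otherwise purely definitional argument---is the interpretation of the $\arg\max$ notation: the expression $f(\theta_i,\theta_{-i})$ appearing on the left is an outcome in $\Omega$, whereas the $\arg\max$ formally ranges over reports $\theta_i'$. I would make explicit that the intended reading is that the truthful report achieves the maximum attainable valuation, i.e.\ $v_i(\theta_i, f(\theta_i,\theta_{-i})) = \max_{\theta_i'} v_i(\theta_i, f(\theta_i',\theta_{-i}))$. With this reading the equivalence is immediate in both directions, and the proof is complete with no further machinery.
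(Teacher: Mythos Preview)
Your proof is correct and follows essentially the same approach as the paper: both directions are obtained by directly unwinding the definition of strategy-proofness as the inequality $v_i(\theta_i,f(\theta_i,\theta_{-i}))\ge v_i(\theta_i,f(\theta_i',\theta_{-i}))$ for all $\theta_i'$, which is equivalent to the $\arg\max$ condition. Your remark on the intended reading of the $\arg\max$ notation is a nice clarification, but otherwise the argument coincides with the paper's.
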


\begin{proof}
If the above condition is satisfied, then $\forall
i,\theta_i,\theta_i',\theta_{-i}$, we have
$v_i(\theta_i,f(\theta_i,\theta_{-i}))\ge
v_i(\theta_i,f(\theta_i',\theta_{-i}))$. That is, reporting truthfully is a
dominant strategy.

If reporting truthfully is a dominant strategy, then $\forall
i,\theta_i,\theta_i',\theta_{-i}$, we have\\%latex
$v_i(\theta_i,f(\theta_i,\theta_{-i}))\ge
v_i(\theta_i,f(\theta_i',\theta_{-i}))$. That is, $\forall
i,\theta_i,\theta_{-i}$, we have
$v_i(\theta_i,f(\theta_i,\theta_{-i}))\ge
\max_{\theta_i'}v_i(\theta_i,f(\theta_i',\theta_{-i}))$, which is
equivalent to $f(\theta_i,\theta_{-i})\in
\arg\max_{\theta_i'}v_i(\theta_i,f(\theta_i',\theta_{-i}))$.
\end{proof}

That is, an agent always receives her most-preferred choice among outcomes
that she can attain with some type report.  We are now ready to present the
characterization of FNP(W) social choice functions.

\begin{proposition}
Suppose that for every outcome $o \in \Omega$, there exists some type
$\theta_i \in \Theta$ such that $\{o\} = \\%latex 
\arg \max_{o' \in O}
u_{\theta_i}(o')$ (each $o$ is the unique most-preferred outcome for some
type).  Then, a strategy-proof and individually rational social choice
function $f$ is FNP(W) if and only if it satisfies the following condition:
$\forall i,\theta_{-i},\theta_0$, we have
$\{f(\theta_i,\theta_{-i})|\theta_i \in \Theta\}\supseteq
\{f(\theta_i,\theta_{-i}\cup \{\theta_0\})|\theta_i \in \Theta\}$.  That
is, with an additional other agent, the set of outcomes that an agent can
choose decreases or stays the same.
\end{proposition}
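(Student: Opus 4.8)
The plan is to recast both FNP(W) and the stated set-containment condition in terms of the \emph{opportunity set} $A(\theta_{-i}) := \{f(\theta_i',\theta_{-i}) : \theta_i' \in \Theta\}$, namely the collection of outcomes that a single agent can induce against the fixed reports $\theta_{-i}$. By the strategy-proofness characterization already established, the truthful outcome $f(\theta_i,\theta_{-i})$ is exactly the $v_i(\theta_i,\cdot)$-maximizer over $A(\theta_{-i})$. The first key step is to observe that the set of outcomes an agent with true type $\theta_i$ can reach by \emph{any} false-name manipulation (submitting identities $\theta_{i1},\dots,\theta_{ik}$) is precisely $\bigcup_{T} A(\theta_{-i}\cup T)$, where $T$ ranges over all finite multisets of types: an arbitrary reported profile is ``$\theta_{-i}$ together with the extra identities in $T$, plus one free identity left to vary,'' and by symmetry of $f$ each such profile yields an element of $A(\theta_{-i}\cup T)$; conversely every element of $A(\theta_{-i}\cup T)$ is reached by some choice of the free identity. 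Since there are no payments, withdrawal changes nothing about the realized outcome, so this description captures FNPW as well as FNP (consistent with the footnote that the two coincide here).

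For the forward (``if'') direction, I would show that the single-addition hypothesis $A(\theta_{-i}) \supseteq A(\theta_{-i}\cup\{\theta_0\})$ telescopes by induction on $|T|$ to $A(\theta_{-i}) \supseteq A(\theta_{-i}\cup T)$ for every finite $T$: peel off one added type at a time, applying the hypothesis to the progressively enlarged population. Hence $\bigcup_T A(\theta_{-i}\cup T) = A(\theta_{-i})$, so no false-name profile reaches an outcome outside $A(\theta_{-i})$. Because the truthful outcome already maximizes $v_i(\theta_i,\cdot)$ over $A(\theta_{-i})$, no manipulation can strictly help, which is exactly FNP(W).

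For the reverse (``only if'') direction, I would argue by contraposition. If the condition fails, there exist $\theta_{-i}$, $\theta_0$, and an outcome $o \in A(\theta_{-i}\cup\{\theta_0\}) \setminus A(\theta_{-i})$. Here the richness hypothesis does the essential work: choose the type $\theta_i$ whose \emph{unique} most-preferred outcome is $o$. Reporting truthfully, this agent obtains her favorite element of $A(\theta_{-i})$, which cannot be $o$ and is therefore \emph{strictly} worse than $o$. But by submitting $\theta_0$ together with a second identity $\theta''$ chosen so that $f(\theta'',\theta_{-i}\cup\{\theta_0\})=o$ (possible since $o\in A(\theta_{-i}\cup\{\theta_0\})$), she reaches $o$ exactly and strictly improves, contradicting FNP(W).

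The main obstacle I anticipate is the bookkeeping in the first step: FNP(W) quantifies over arbitrarily many false identities, whereas the stated condition compares only the addition of a single agent, so the crux is to verify rigorously that (i) every multi-identity manipulation lands in some $A(\theta_{-i}\cup T)$ and (ii) the single-step containment genuinely telescopes to arbitrary $T$. The strictness needed in the reverse direction hinges entirely on the \emph{uniqueness} clause of the richness assumption, since a merely weakly-preferred witness would not force a strict improvement; I would therefore be careful to invoke uniqueness precisely at that point. Individual rationality is a standing assumption of the framework but is not needed for the core of this argument.
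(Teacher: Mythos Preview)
Your proposal is correct and follows essentially the same approach as the paper: both directions proceed exactly as you outline, with the necessity direction arguing by contraposition using the richness assumption to extract a strictly improving two-identity manipulation, and the sufficiency direction observing that the opportunity set only shrinks as identities are added, so any multi-identity outcome is already attainable with one identity. Your write-up is somewhat more explicit than the paper's about the telescoping induction over $|T|$ and about why withdrawal is moot without payments, but these are elaborations of the same argument rather than a different route.
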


\begin{proof} We first show that if $f$ is FNP(W), then the condition must be
  satisfied. Suppose not, that is, for some $i, \theta_{-i},\theta_0$,
  there exists some $o \in \{f(\theta_i,\theta_{-i}\cup
  \{\theta_0\})|\theta_i \in \Theta\} \setminus
  \{f(\theta_i,\theta_{-i})|\theta_i \in \Theta\}$.  Then, by assumption,
  there exists some $\theta_i \in \Theta$ such that $\{o\} = 
  \arg \max_{o' \in O} u_{\theta_i}(o')$.  It follows that an agent facing type profile
  $\theta_{-i}$ cannot obtain $o$ with a single report, but can obtain it
  by reporting both $\theta_0$ and some other type (such as, by
  strategy-proofness, $\theta_{i}$).  Because $o$ is her unique
  most-preferred outcome, she prefers to engage in this manipulation,
  contradicting FNP(W).

%there exists a situation in which an additional agent increases the set of
%allowable outcomes an agent faces. In this situation, let us consider an
%agent that strictly prefers an allowable outcome that is only possible with
%the additional agent.  We see that the agent we are considering has an
%incentive to create false-name identities. Therefore, if $f$ is FNP(W), the
%condition in the claim must be satisfied.

Conversely, we show that if the  condition is satisfied, then $f$ is FNP(W).
%If the above condition is satisfied, then an agent has no incentive to create
%false-name identities. 
By assumption, $f$ is strategy-proof and individually rational, so we only
need to check that an agent has no incentive to use multiple identifiers.
Suppose that $o$ is an outcome that $i$ can obtain when facing
$\theta_{-i}$ by submitting multiple identities.  Because the set of
choices is nonincreasing in the number of identifiers used according to the
condition, it must be that $o \in \{f(\theta_i,\theta_{-i})|\theta_i \in
\Theta\}$.  Hence, there is no reason for her to use more than one identity.
%An agent that is reporting under multiple false-name
%identities may as well just drop all but one false-name identities. The
%remaining identity faces an equal or larger set of allowable outcomes. That is,
%she can still choose what she initially got.  
\end{proof}

\end{document}